\newif\ifieee
\newtheorem{definition}{Definition}
\newtheorem{theorem}{Theorem}
\newtheorem{corollary}{Corollary}
\newtheorem{remark}{Remark}
\newtheorem{problem}{Problem}
\newtheorem{example}{Example}
\DeclareMathOperator*{\argmax}{arg\,max}
\DeclareMathOperator*{\argmin}{arg\,min}
\algnewcommand\algorithmicforeach{\textbf{for each}}
\newcommand{\imc}{\mathcal{I}}
\newcommand{\reals}{\mathbb{R}}
\newcommand{\plow}{\check p}
\newcommand{\phigh}{\hat p}
\newcommand{\Plow}{\check P}
\newcommand{\Phigh}{\hat P}
\newcommand{\IStates}{Q}
\newcommand{\Istate}{q}
\newcommand{\Indy}{\mathbf{1}}
\newcommand{\x}{\mathbf{x}}
\newcommand{\w}{\mathbf{w}}
\newcommand{\adv}{\gamma}
\newcommand{\Adv}{\Gamma}
\newcommand{\dist}{p(\w)}
\renewcommand{\xi}{\x^i}
\newcommand{\wi}{\w^i}
\newcommand{\Istatei}{\Istate^{'i}}
\newcommand{\Cupper}{\overline C^*}
\newcommand{\Clower}{\underline C^*}
\newcommand{\Post}{Post}
\title{\LARGE \bf
Formal Abstraction of General Stochastic Systems via Noise Partitioning
}
\author{John Skovbekk$^{1}$, Luca Laurenti$^{2}$,  Eric Frew$^{1}$, and Morteza Lahijanian$^{1}$% <-this % stops a space
\thanks{This work was supported in part by NSF grant 2039062.}% <-this % stops a space
\thanks{$^{1}$John Skovbekk, Eric Frew, and Morteza Lahijanian are with the Smead Department of Aerospace      Engineering Sciences, CU Boulder, Boulder, CO, USA
        {\tt\small \{firstname.lastname\}@colorado.edu}}%
\thanks{$^{2}$Luca Laurenti is with the Center for Systems and Control at TU Delft,
        Delft, Netherlands
        {\tt\small L.Laurenti@tudelft.nl}}%
}
\begin{document}

\maketitle
\thispagestyle{arxiv} % ! must be empty for submission, and no footnotes, according to the authors guide
\pagestyle{arxiv}

\begin{abstract}
Verifying the performance of safety-critical, stochastic systems with complex noise distributions is difficult.
We introduce a general procedure for the finite abstraction of nonlinear stochastic systems with non-standard (e.g., non-affine, non-symmetric, non-unimodal) noise distributions for verification purposes.
The method uses a finite partitioning of the noise domain to construct an interval Markov chain (IMC) abstraction of the system via transition probability intervals. 
Noise partitioning allows for a general class of distributions and structures, including multiplicative and mixture models, and admits both known and data-driven systems.
The partitions required for optimal transition bounds are specified for systems that are monotonic with respect to the noise, and explicit partitions are provided for affine and multiplicative structures.
By the soundness of the abstraction procedure, verification on the IMC provides guarantees on the stochastic system against a temporal logic specification.
In addition, we present a novel refinement-free algorithm that improves the verification results.
Case studies on linear and nonlinear systems with non-Gaussian noise, including a data-driven example, demonstrate the generality and effectiveness of the method without introducing excessive conservatism.
    
\end{abstract}

\section{INTRODUCTION}
The deployment of autonomous systems for safety-critical applications, such as medical robotics and self-driving vehicles, requires diligent verification of their behavior.  
Such systems are inherently stochastic due to uncertainty in physical components (e.g., noise in sensors and actuators) or black-box software components.
Formal methods provides rigorous techniques for verifying stochastic systems subject to temporal logic specifications~\cite{BaierBook2008, lavaei2022automated}.
In particular, powerful model checking algorithms exist for finite-state Markov processes that can scale to large systems~\cite{BaierBook2008}.
However, to apply them to continuous-space systems, finite abstractions with correctness guarantees are required \cite{lavaei2022automated,alur2000discrete}, which is difficult in both accuracy and scalability. 
% If such finite-state abstractions soundly model a continuous-space system, then the resulting guarantees can be extended to the original system \cite{lavaei2022automated,alur2000discrete}.
For this reason, most existing work focuses on specific classes of stochastic systems often with strong assumptions on the dynamics or noise models \cite{lahijanian2015formal, jackson2021strategy, dutreix2022abstraction, badings2023robust}, which we aim to relax in this work. 

% In this work, we aim to relax some of these limitations and develop a more general framework for sound and scalable abstraction of stochastic systems.

% Uncertain Markov models have been effective for abstracting stochastic systems~\cite{lahijanian2015formal,cauchi2019efficiency} \ml{also cite abate}.  In addition to stochasticity, they allow the inclusion of other sources of uncertainty (e.g., error due to  discretization), which aids with simlify correctness 

Uncertain Markov models, namely interval Markov chains (IMCs \cite{givan2000bounded}) have proven to be effective abstraction models for stochastic systems~\cite{lahijanian2015formal,cauchi2019efficiency,jackson2021strategy,jiang2022safe,dutreix2022abstraction}. 
Beyond capturing stochasticity, they also provide a means to incorporate other sources of uncertainty (e.g., discretization error), thereby facilitating correctness. 
Yet, the difficulty remains for general stochastic models due to the need to correctly compute stochastic transition kernels.
Existing techniques rely on standard (unimodal, symmetric and zero-mean) or affine noise distributions~\cite{jackson2021formal, jiang2022safe, dutreix2022abstraction, vanHuijgevoort2023temporal}, linear systems~\cite{cauchi2019efficiency, badings2023robust}.
Additionally, stochastic systems may possess multiple sources of uncertainty, such as data-driven settings~\cite{jackson2021strategy, hashimoto2022learning, jiang2022safe, adams2022formal,badings2023robust}.
Thus, IMC abstraction approaches for nonlinear systems that admit a wider class of distributions and structures are necessary to lift these limitations.

Another difficulty facing abstraction is the state-explosion dilemma in higher dimensions.  
Common approaches to this problem are focused on parallelizing computation~\cite{soudjani2015faust} and adaptive refinement~\cite{lahijanian2015formal, esmaeil2013adaptive}. 
Despite these efforts, the state-explosion problem remains, and new ideas are needed for further mitigation.
Specifically, using the continuous system in tandem with the abstraction to improve the verification without refinement is largely unexplored.

\textbf{Contributions:} We present an abstraction method for nonlinear stochastic systems with non-affine, non-standard noise that admits known and data-driven systems.
Our method generalizes an approach for systems learned from data with affine, sub-Gaussian noise~\cite{jackson2021strategy}.
It is based on partitioning the noise domain to bound the transition kernel of the IMC, side-stepping the need to evaluate it.
We show optimality criteria for the noise partitions for systems with noise monotonicity, and provide explicit partitions for affine and multiplicative structures.
To help address the state-explosion problem, we also propose a refinement-free method to improve the verification results of an abstraction by using the continuous process.  
Finally, we demonstrate the efficacy of the method by verifying linear, nonlinear and data-driven systems without introducing excessive conservatism.

In summary, our contributions are (1) a procedure for constructing abstractions via noise partitioning (Theorem \ref{thm:transition_bounds}); (2) optimal noise partition sizes and values for a general class of distributions (Theorem \ref{thm:partitions}); (3) a procedure to improve the verification of the abstraction without refinement (Algorithm \ref{alg:clustering}), and (4) evaluations and applications to nonlinear systems with non-standard and multiplicative noise (Section \ref{sec:examples}).

\section{PROBLEM FORMULATION}\label{sec:problem}
We first introduce the stochastic process and its finite abstraction, and then formulate two main problems.

\subsection{Stochastic Process Model}

Consider the following discrete-time stochastic process 
\begin{equation}\label{eq:system}
    \x(k+1) = f(\x(k), \w(k)),
\end{equation}
where $\x\in\reals^n$, $\w \in W \subseteq \reals^{n_\w}$ is i.i.d. process noise sampled from distribution $\dist$ with possibly bounded support, and $f: \reals^n \times W \to \reals^n$ is a possibly nonlinear function.
Distribution $\dist$ is allowed to be non-standard, i.e., non-uniform and non-symmetric.
Let $X\subset\reals^n$ be a Borel measureable set.
The one-step transition kernel, which defines the probability of $\x(k+1) \in X$ given $\x(k)=x_k$ is
\begin{equation}\label{eq:transition_kernel}
    T(X\mid x_k) = \int_{X} f(x_k, \w(k)) p(\w) d\w .
\end{equation}
The transition kernel $T$ is the basis for probability measures of paths of System~\eqref{eq:system}~\cite{Klenke2008}, i.e.,
given an initial condition $\x(0)=x_0$,
$\Pr\big(\x(0) \in X\mid x_0\big) = \Indy(x_0\in X)$ and
$\Pr\big(\x(k+1) \in X\mid x_{k}\big) = T(X\mid x_k)$ 
where $\Indy(\cdot)$ is the indicator function that returns 1 if the argument is true and 0 otherwise.
\subsection{Interval Markov Chains}
A finite abstraction of System~\eqref{eq:system} is often an interval Markov chain~\cite{lavaei2022automated}, which defines a space of Markov chains.
\begin{definition}[IMC]
    An interval Markov chain is a tuple $\imc=(\IStates, \Plow, \Phigh, AP, L)$ where
    \begin{itemize}
        \item $\IStates$ is a finite set of states; 
        \item $\Plow: \IStates \times \IStates \to [0,1]$ is the transition interval lower-bound function, 
        where, $\forall \Istate,\! \Istate'\!\! \in \IStates$, $\Plow(\Istate, \Istate')\leq \Pr(\Istate, \Istate')$;
        \item $\Phigh: \IStates \times \IStates \to [0,1]$ is the transition interval upper-bound function, where  $\forall \Istate,\! \Istate'\!\! \in \IStates$, $\Phigh(\Istate, \Istate') \geq \Pr(\Istate,\Istate')$;
    \end{itemize}
\end{definition}

It holds that, for every $\Istate \in \IStates$, $\sum_{\Istate \in \IStates'}\Plow(\Istate,\Istate') \leq 1 \leq  \sum_{\Istate \in \IStates'}\Phigh(\Istate,\Istate')$.
Define the \textit{adversary} $\adv: \IStates \times \IStates \to [0,1]$ as a true transition probability function such that, for all $\Istate,\Istate'\in \IStates$, $\adv(\Istate,\Istate') \in [\Plow(\Istate,\Istate'), \Phigh(\Istate,\Istate')]$ and $\sum_{\Istate'\in \IStates} \adv(\Istate,\Istate') = 1$.
The set of all adversaries is denoted by $\Adv$.
Under adversary $\adv$, the IMC reduced to a Markov chain with a well-defined probability measure over its paths.

Consider a path property $\phi$.
The probability that all paths initiated at $\Istate\in\IStates$ satisfy $\phi$ is denoted by $\Pr(\Istate\models\phi)$.
When $\phi$ is expressed in probabilistic computation tree logic (PCTL) or linear temporal logic (LTL) \cite{BaierBook2008}, 
$\Pr(\Istate\models\phi)$ is equivalent to the reachability probability on an IMC that composes $\imc$ with $\phi$.  
W.L.O.G., let $\IStates_\phi \subseteq Q$ be the set of states, reaching which satisfies $\phi$. 
While the exact value of $\Pr(\Istate\models\phi)$ cannot be computed, it can be bounded, i.e., $\Pr(\Istate\models\phi) \in [\plow(\Istate), \phigh(\Istate)]$, using dynamic programming~\cite{lahijanian2015formal}.  
For the lower bound,
\begin{equation}
    \label{eq:value}
    \plow^0(\Istate) = \Indy(\Istate \in \IStates_\phi),\; 
    \plow^{k}(\Istate) = \min_{\adv\in\Adv}\!\sum_{\Istate'\in\IStates} \!\! \adv(\Istate, \Istate') \plow^{k-1}\!(\Istate').
\end{equation}
The upper bound $\phigh$ is computed by replacing the $\min$ with $\max$ operator and $\plow$ with $\phigh$.
The computation of the satisfaction bounds $\plow(\Istate), \phigh(\Istate)$ for all $\Istate\in\IStates$ is called the IMC verification procedure. 

\subsection{Problem Statements}
Verifying System~$\eqref{eq:system}$ against $\phi$ can be performed by discretizing the state space of $\eqref{eq:system}$ to build an IMC abstraction $\imc$ that soundly models $\eqref{eq:system}$, and then verifying $\imc$ against $\phi$.
Let the discretization of a compact subset of $\reals^n$ be $\IStates_X$, and let $\Istate$ refer to both an IMC state in $\IStates_X$ and a subset of $\reals^n$.
The verification results can be extended to \eqref{eq:system}, i.e., for every $x \in \Istate$, $\Pr(x \models \phi) \in [\plow(\Istate), \phigh(\Istate)]$, if the abstraction satisfies the soundness definition below as shown in \cite[Theorem 2]{jackson2021strategy}.
\begin{definition}[Abstraction Soundness]\label{def:soundness}
    An IMC abstraction $\imc$ is sound with respect to System~\eqref{eq:system} if, for all $x \in\Istate$, $\Plow(\Istate,\Istate')\leq T(\Istate'\mid x) \leq \Phigh(\Istate,\Istate')$ holds for all $\Istate\in\IStates_X$.
\end{definition}
To satisfy this definition, we assume that one of the requirements of $\phi$ is to remain within a bounded (safe) set $X \subset \reals^n$ and refer to $\reals^n\setminus X$ as an unsafe set.

Existing methods for IMC abstraction of stochastic systems are largely limited to simple dynamics -- affine in noise with unimodal or symmetric distributions, or linear dynamics.
The first problem considered here aims to establish a method that jointly addresses of these limitations.
\begin{problem}[Abstraction Construction]
    Construct a sound IMC abstraction for System~\eqref{eq:system} with a nonlinear $f$ and non-affine and non-standard $\dist$.
\end{problem}
In Section~\ref{sec:abstraction}, we propose a method that partitions the domain of $\dist$ to construct the transition bounds of the IMC which are valid for arbitrary distributions.
Solving this problem allows the application of IMC abstractions to a wider class of systems, including data-driven  systems.

The conventional approach to improving the satisfaction intervals of an IMC is to refine the discretiztion $\IStates_X$, which contributes to the state-explosion dilemma.
The next problem aims to improve the intervals on the same discretization $\IStates_X$ by leveraging the model of \eqref{eq:system}.
\begin{problem}[Verification Improvement]
    Given abstraction $\imc$ of System~\eqref{eq:system}, reduce the verification error $\phigh(\Istate) - \plow(\Istate)$ for all $\Istate\in\IStates_X$ without refining $\IStates_X$.
\end{problem}
In Section~\ref{sec:clustering}, we propose an approach based on clustering states in $\IStates_X$ that uses the structure of the transition bounds and \eqref{eq:system} to reduce the gap between $\phigh$ and $\plow$.

\begin{remark}
While we focus on IMC abstractions, the results are trivially applied to interval Markov decision process (IMDP) abstraction methods via concatenation of IMCs.
\end{remark}

\section{ABSTRACTION via NOISE PARTITIONS}\label{sec:abstraction}
The IMC abstraction for System~\eqref{eq:system} involves discretizing the continuous state-space and computing transition probability bounds between the resulting states.

\subsection{State Discretization}
Constructing a finite-state abstraction for System \eqref{eq:system} requires a bounded subset $X\subset\reals^n$.
The abstraction will be sound on $X$, but not the entire state-space as discussed in Definition~\ref{def:soundness}. 
$X$ is partitioned into a finite set of bounded and convex regions $\IStates_X$, which implies, for every $\Istate, \Istate' \in \IStates$, $\Istate \cap \Istate'$ has zero measure and $\bigcup_{\Istate\in\IStates_X} \Istate = X$.
Let $\Istate_{\neg X}=\reals^n \setminus X$. 
Then, the complete state set of the IMC is $\IStates = \IStates_X \cup \{\Istate_{\neg X}$\}.
The next IMC abstraction step computes the transition bounds between states.

\subsection{Transition Bounds with Noise Partitioning}
The definition and computation of the transition bound functions $\Plow,\Phigh$ begins with states in $\IStates_X$.  
The transitions to $\Istate_{\neg X}$ is a modified case.
The connection between System \eqref{eq:system} and the abstraction arises from the transition kernel $T$ in \eqref{eq:transition_kernel} over IMC states. 
From a given $x\in\Istate$, the transition kernel to $\Istate'$ is $T(\Istate'\mid x)$.
Finding bounds on the kernel amounts to searching over all $x\in\Istate$, i.e.,
$\min_{x\in \Istate} T(\Istate'\mid x)$, and $\max_{x\in \Istate} T(\Istate'\mid x)$.
To satisfy Definition \ref{def:soundness}, $\Plow(\Istate,\Istate')$ and $\Phigh(\Istate,\Istate')$ must bound these extrema.
For tractable evaluation of $T$ in \eqref{eq:transition_kernel} with non-standard distributions, the probability measure of $\w$ is evaluated over partitions of its domain $W$.  

\begin{definition}[Noise Partition]
    \label{def:partition}
    Let $\w(k)\in W$. 
    A noise partition set $C$ is a measure-preserving discretization of $W$, i.e., $\bigcup_{c\in C} c = W$ and $\forall c\in C$, $\sum_{c\in C}\int_c \dist d\w = \int \dist d\w = 1$.
\end{definition}
For brevity, $c$ is used in place of $\w(k) \in c$, and its probability is $\Pr(c) = \int_c \dist d\w$.  
For a given $c \in C$, the posterior of region $\Istate$ is $\Post(\Istate, c) = \{f(x,w) \mid x \in \Istate, \, w \in c\}$.
The following theorem bounds the transition kernel.
\begin{theorem}\label{thm:transition_bounds}
    Let $\Istate, \Istate' \in \IStates_X$ and 
    $C$ be a partition of $W$ according to Definition \ref{def:partition}.
    Then, the transition kernel is lower- and upper-bounded, respectively, by
    \begin{subequations}
    \begin{align}
        \min_{x_k\in \Istate} T(\Istate'\mid x_k) &\geq \sum_{c\in C} \Indy(\Post(\Istate, c) \subseteq \Istate') \Pr(c)\label{eq:lower}\\
        \max_{x_k\in \Istate} T(\Istate'\mid x_k) &\leq \sum_{c\in C} \Indy(\Post(\Istate, c) \cap \Istate' = \emptyset) \Pr(c)\label{eq:upper} \;\;\;
    \end{align}
\end{subequations}
\end{theorem}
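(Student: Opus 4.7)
My strategy is to rewrite the transition kernel as an integral over $W$ of an indicator, split that integral using the noise partition, and then classify each cell by how its posterior relates to $\Istate'$. The bounds will follow by replacing ``ambiguous'' cells with their worst-case contribution so that the resulting inequality is uniform in $x_k\in\Istate$.

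First, reading the kernel in \eqref{eq:transition_kernel} in its standard pushforward form gives
\begin{equation*}
    T(\Istate'\mid x_k) \;=\; \int_W \Indy\!\bigl(f(x_k,\w)\in\Istate'\bigr)\,\dist\,d\w,
\end{equation*}
and applying Definition~\ref{def:partition} to decompose the domain of integration, I obtain
\begin{equation*}
    T(\Istate'\mid x_k) \;=\; \sum_{c\in C}\int_c \Indy\!\bigl(f(x_k,\w)\in\Istate'\bigr)\,\dist\,d\w.
\end{equation*}
Each summand lies in $[0,\Pr(c)]$, so the task is to identify cells for which the summand is pinned at one of the endpoints independently of $x_k$.

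Next I classify each $c\in C$ into three cases. If $\Post(\Istate,c)\subseteq\Istate'$, then by definition of the posterior, $f(x_k,\w)\in\Istate'$ for every $x_k\in\Istate$ and every $\w\in c$, so the indicator is identically $1$ and the summand equals $\Pr(c)$ for all $x_k$. If $\Post(\Istate,c)\cap\Istate'=\emptyset$, then analogously the indicator is identically $0$ and the summand vanishes for all $x_k$. In the remaining ambiguous case the summand is only guaranteed to lie in $[0,\Pr(c)]$. Dropping every ambiguous contribution yields a lower bound of $\sum_{c\in C}\Indy(\Post(\Istate,c)\subseteq\Istate')\Pr(c)$, which is uniform in $x_k\in\Istate$; taking a minimum over $x_k$ on the left side proves \eqref{eq:lower}. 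Assigning every ambiguous cell its worst-case value $\Pr(c)$ gives the dual, again uniform, upper bound matching \eqref{eq:upper}, and a maximum over $x_k$ finishes that case.

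There is no serious obstacle here. The one subtle point is recognising that defining $\Post(\Istate,c)$ as the image over the whole region $\Istate$ (rather than at a single point) eliminates any residual dependence on $x_k$ in the classification criteria, which is exactly what lets a single cell-wise bound hold simultaneously for every $x_k\in\Istate$ and hence pass through the extremum. This uniform-in-$x_k$ handling is what makes the resulting transition interval sound in the sense of Definition~\ref{def:soundness}.
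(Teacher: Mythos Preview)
Your proposal is correct and follows essentially the same route as the paper: rewrite $T(\Istate'\mid x_k)$ as $\int_W \Indy(f(x_k,\w)\in\Istate')\,\dist\,d\w$, split the integral over the cells of $C$, and bound each cell-integral by $0$ or $\Pr(c)$ according to whether $\Post(\Istate,c)$ misses, is contained in, or merely meets $\Istate'$. The only cosmetic difference is that you make the three-way classification explicit and emphasise the uniformity in $x_k$, whereas the paper proceeds directly with the upper bound and remarks that the lower bound is analogous; the underlying argument is identical.
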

\begin{proof}
We begin with finding the upper bound. 
Using $T$ and finding the maximizing point, 
\begin{align}
    \max_{x_k\in \Istate} T(\Istate'\mid x_k) =  \max_{x_k\in \Istate} \int \Indy(\x(k+1)\in \Istate' \mid x_k,w_k) \dist d\w \label{eq:proof1}
\end{align}
The integral is split according to the partitions in $C$,
\begin{align}
    \text{\eqref{eq:proof1}} = \max_{x_k \in \Istate} \sum_{c\in C} \int_c \Indy(\x(k+1)\in \Istate' \mid x_k\,w_k) \dist d\w  \label{eq:proof2}
\end{align}
which maintains equality due to the linearity of the integral.
The indicator function is upper-bounded by the existence of a point in the intersection of $\Post(\Istate,c)$ with $\Istate'$,
\begin{align}
    \text{\eqref{eq:proof2}} \leq\sum_{c\in C} \Indy(\Post(\Istate, c) \cap \Istate' \neq \emptyset) \Pr(c)
\end{align}
where the $\max$ operator is dropped, as $x_k$ is subsumed by $\Istate$. 
The lower-bound is similar, instead doing under-approximation by checking if $\Post(\Istate,c)\subseteq\Istate'$.
\end{proof}

The transition bounds found using Theorem \ref{thm:transition_bounds} require two components: $\Post(q,c)$ and $\Pr(c)$.  
Note that for the bounds in \eqref{eq:lower}-\eqref{eq:upper}, an over-approximation of $\Post(\Istate,c)$ can be used, which can be obtained for nonlinear systems using local linear bounds of $f(\x(k),\w(k))$~\cite{jin2021mesh,mathiesen2022safety}, discreization with Taylor model flowpipes~\cite{chen2013flow}, or mixed-monotone maps~\cite{dutreix2018efficient} depending on the knowledge of System \eqref{eq:system}.
$\Pr(c)$ can be computed analytically for distribution-dependent soundness guarantees, or statistically for sampling-based guarantees\cite{badings2023robust}.
The next section discusses how partitions are selected to optimize the bounds in Theorem~\ref{thm:transition_bounds}.

To complete the abstraction, transitions to the unsafe state $\Istate_{\neg X}$ are defined using the following corollary.
\begin{corollary}[Unsafe State Transitions]
    For every state $\Istate\in\IStates_X$, the transition bounds to $\Istate_{\neg X}$ are 
    % \begin{align}
        $\Plow(\Istate, \Istate_{\neg X}) = 1 - \max_{x_k \in q} T(X \mid x_k)$ and 
        $\Phigh(\Istate, \Istate_{\neg X}) = 1 - \min_{x_k \in q} T(X \mid x_k)$.
    % \end{align}
    Additionally, the transition bounds between $\Istate_{\neg X}$ and itself are $\Plow(\Istate_{\neg X}, \Istate_{\neg X}) = \Phigh(\Istate_{\neg X}, \Istate_{\neg X}) = 1$.
\end{corollary}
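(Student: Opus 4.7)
The plan has two parts, matching the two claims of the corollary. For the transition bounds from $\Istate$ to $\Istate_{\neg X}$, I would exploit that $X$ and $\Istate_{\neg X} = \reals^n \setminus X$ form a measurable partition of $\reals^n$, so that for every $x_k \in \Istate$ the transition kernel satisfies $T(X \mid x_k) + T(\Istate_{\neg X} \mid x_k) = 1$. Rearranging gives $T(\Istate_{\neg X} \mid x_k) = 1 - T(X \mid x_k)$. By Definition~\ref{def:soundness}, the tightest valid $\Plow(\Istate, \Istate_{\neg X})$ is $\min_{x_k \in \Istate} T(\Istate_{\neg X} \mid x_k)$, and applying the identity $\min(1 - g) = 1 - \max g$ yields $1 - \max_{x_k \in \Istate} T(X \mid x_k)$, matching the claim. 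The upper bound $\Phigh(\Istate, \Istate_{\neg X})$ follows symmetrically via $\max(1 - g) = 1 - \min g$. In practice, the quantities $\max_{x_k \in \Istate} T(X \mid x_k)$ and $\min_{x_k \in \Istate} T(X \mid x_k)$ are themselves bounded by Theorem~\ref{thm:transition_bounds} applied with $\Istate'$ replaced by $X = \bigcup_{\Istate'' \in \IStates_X} \Istate''$, which is what makes the construction computable.

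For the second claim, that $\Plow(\Istate_{\neg X}, \Istate_{\neg X}) = \Phigh(\Istate_{\neg X}, \Istate_{\neg X}) = 1$, this is best framed as a sound modeling convention justified by the problem setup rather than derived from the stochastic dynamics on $\reals^n \setminus X$. Because $\phi$ requires the trajectory to remain in $X$, any path entering $\Istate_{\neg X}$ has already and irrevocably violated $\phi$, so its subsequent behavior does not affect the satisfaction value. Absorbing $\Istate_{\neg X}$ preserves soundness at the specification level: since $\Istate_{\neg X} \notin \IStates_\phi$, the recursion~\eqref{eq:value} yields $\plow(\Istate_{\neg X}) = \phigh(\Istate_{\neg X}) = 0$, which is exactly the true satisfaction probability starting from any $x \in \reals^n \setminus X$. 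I would also note that the self-loop does not perturb $\plow(\Istate)$ or $\phigh(\Istate)$ for $\Istate \in \IStates_X$, because the recursion references $\Istate_{\neg X}$ only through its zero satisfaction value.

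The routine part is the algebraic rearrangement combined with the min/max swap; the one point needing care is the semantic justification for the absorbing self-loop, which relies on the global assumption that safety within $X$ is part of $\phi$ rather than on any property of~\eqref{eq:system} outside $X$.
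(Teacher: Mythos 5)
Your argument is correct and is exactly the reasoning the paper intends (the corollary is stated without an explicit proof): since $T(\cdot\mid x_k)$ is a probability measure and $\Istate_{\neg X}=\reals^n\setminus X$, complementation gives $T(\Istate_{\neg X}\mid x_k)=1-T(X\mid x_k)$, and the min/max swap yields the stated bounds, computable via Theorem~\ref{thm:transition_bounds} applied to $X=\bigcup_{\Istate''\in\IStates_X}\Istate''$. Your justification of the absorbing self-loop as a sound modeling convention, resting on the paper's standing assumption that $\phi$ requires remaining in the safe set $X$, likewise matches the paper's setup.
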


\begin{remark}
    Theorem \ref{thm:transition_bounds} can be applied to general (non-probabilistic) uncertainty sets by interpreting $\Pr(c)$ as a deterministic indicator function.
    For example, for the bounded uncertainty set $W$, choose $c=W$ so $\Pr(c)=1$, and $\Pr(c')=0$ for every other $c'\in C$.
    Effectively, using Theorem \ref{thm:transition_bounds} in this case results in a non-deterministic transition system.
\end{remark}

\section{OPTIMAL PARTITIONS}\label{sec:partitions}
The transition bounds in Theorem \ref{thm:transition_bounds} return valid bounds for any choice of partition, and $C$ can differ between \eqref{eq:lower} and \eqref{eq:upper}.
However, haphazard partitions can result in the trivial transition probability interval $[0,1]$.
The optimal noise partitions  minimize the distance between the transition bounds, i.e.,
given $\Istate,\Istate' \in \IStates_X$,
\begin{subequations}\label{eq:optimal}
\begin{align}
    \Clower &= \argmax_{C} \sum_{c\in C}  \Indy(\Post(\Istate, c) \subseteq \Istate') \Pr(c), \label{eq:opt-lower}\\
    \Cupper &= \argmin_{C} \sum_{c\in C} \Indy(\Post(\Istate, c) \cap \Istate' = \emptyset) \Pr(c), \label{eq:opt-upper}
\end{align}
\end{subequations}
Hence, noise partitions can be optimized for each pair $(\Istate,\Istate')$.

To begin the analysis on these partitions, we assume component-wise noise as defined below.
\begin{definition}[Component-wise Noise]\label{def:components}
    For $i \in \{1,\ldots, n\}$, let $M^i \in \{0,1\}^{n \times n}$ be a matrix whose $i,i$ element is one and all the other elements are zeros.  
    Then, noise $\w(k)\in W\subset\reals^n$ is called \emph{component-wise} if $M^i f(\x(k), \w(k)) = f(\x(k), M^i \, \w(k))$. 
\end{definition}
In other words, the noise vector shares the size of $\x(k)$, and each component $\wi(k)$ only affects $\xi(k+1)$, which admits (but is not limited to) affine and multiplicative noise (see Example \ref{ex:multi}). 
Definition \ref{def:components} does not preclude the noise from being correlated, and the following remark concerns more general structures.
\begin{remark}
    Relaxing the component-wise noise in Definition \ref{def:components} relies on effective noise.
    For example, consider the term $B\w(k)$ with $\w(k)\in\reals^{n_\w}$ and $B\in\reals^{n\times n_\w}$.
    The effective noise terms are $b_i\w(k)$, where $b_i$ is the $i$-th row of $B$, each of which affects $\xi(k+1)$.
\end{remark}

Assuming the noise satisfies Definition \ref{def:components}, the next step is examining how $\Post(\Istate,c)$ changes with respect to $c$.
Intuitively, if increasing $\wi_k$ always increases (or decreases) $\xi(k+1)$, then there can be partitions $C$ where non-empty intersections $\Post(\Istate, c)$ between $\Istate'$ are induced.
This is due to the monotonicity of the system with respect to the noise, which is formally defined below.
\begin{definition}[Noise Monotonicity]
    Monotonicity is the condition $\wi_{k}>\wi_{j} \implies \xi_{k+1}\geq\xi_{j+1}$ or $\xi_{k+1}\leq\xi_{j+1}$ with $\xi_k=\xi_j$. 
    System~\eqref{eq:system} is \emph{monotonic} with respect to $\w(k)$ if each $\xi(k+1)$ is monotonic with respect to $\wi(k)$.
\end{definition}
\begin{example}\label{ex:multi}
    Consider the system $\x(k+1) = f(x(k))\odot \w(k)$, where each $\wi(k)\geq 0$ and $\odot$ is the element-wise product.
    Then the noise acts component-wise, and the system is monotonic with respect to $\w(k)$.
\end{example}
Hitherto, we have made no assumptions about the convexity of $\Post(\Istate,c)$.
Let the $i$ component of a set in $\reals^n$ refer to its projection on the $i$-th unit axis.
The following theorem discusses non-convexity in terms of discontinuities (or holes) in each component of $\Post(\Istate,c)$.
The theorem bounds the sizes of $\Clower$ and $\Cupper$ for a system that is monotonic with respect to uncorrelated noise $\w(k)$.
\begin{theorem}[Partition Size]\label{thm:partitions}
    Let $\Istate,\Istate' \subset \reals^n$ be bounded and convex, each component of $\Post(\Istate,c)$ contain at most $d$ discontinuities, and $\Clower$, $\Cupper$ be as in \eqref{eq:opt-lower}-\eqref{eq:opt-upper}.
    If System~\eqref{eq:system} is monotonic with respect to $\w(k)$, and the components of $\w(k)$ are uncorrelated, then $|\Cupper|$ and $|\Clower|$ is at most $(3+2d)n$.
\end{theorem}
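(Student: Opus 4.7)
The plan is to exploit the component-wise and uncorrelated structure of the noise to reduce the bound on $|\Clower|$ and $|\Cupper|$ to a per-dimension counting problem, and then use noise monotonicity together with the discontinuity assumption to count intervals in each dimension. First I would argue that the optimal partitions may be taken in product form $C = C^1 \times \cdots \times C^n$, where each $C^i$ partitions $W^i$ into intervals. This decoupling rests on three facts: (a) by the component-wise assumption, $\Post(\Istate, c)^i$ depends only on $c^i$; (b) passing to the axis-aligned bounding box of the convex $\Istate'$ as a sound over-approximation reduces $\Post(\Istate,c)\subseteq \Istate'$ and $\Post(\Istate,c)\cap\Istate' = \emptyset$ to conjunctions/disjunctions of per-axis conditions; and (c) by independence of the noise components, $\Pr(c) = \prod_i \Pr_i(c^i)$. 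Under this decoupling the objectives in \eqref{eq:opt-lower}--\eqref{eq:opt-upper} factor across dimensions, so it suffices to bound $|C^i|$ separately and then aggregate.

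Next, for each dimension $i$ I would place partition boundaries at the at most $d$ jump points of $f^i$ in $\w^i$, producing at most $d+1$ segments on each of which $\Post(\Istate, \cdot)^i$ is continuous in the endpoints of $c^i$. On any such segment, noise monotonicity makes the extremal functions $m(\w^i) = \min_{x\in\Istate}f^i(x,\w^i)$ and $M(\w^i) = \max_{x\in\Istate}f^i(x,\w^i)$ monotonic in $\w^i$, so the subset of that segment on which the per-axis target condition holds (namely $\Post^i \subseteq [\ell_i, u_i]$ for $\Clower^i$, or the corresponding disjointness condition for $\Cupper^i$) is a single subinterval. Each segment thus contributes three subintervals --- the target interval where the indicator fires plus the two flanking outside intervals --- for a total of $3(d+1)$ intervals per dimension.

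Finally, I would merge the two outside intervals straddling each discontinuity. For $\Clower$, this merge is automatic: the union of two intervals whose $\Post$ already fails to be contained in $\Istate'$ still fails to be contained. For $\Cupper$, the merge relies on monotonicity, which keeps the jump value at $\tau_{i,j}$ on the same side of $[\ell_i,u_i]$ as the adjacent outside segments, so the merged element's $\Post$ remains disjoint from $\Istate'$. Each of the at most $d$ discontinuities permits one such merge, saving $d$ intervals and leaving $3(d+1) - d = 2d + 3$ intervals per dimension. Aggregating across the $n$ dimensions gives the claimed $|\Clower|, |\Cupper| \leq (3+2d)n$.

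The main obstacle will be the $\Cupper$ merge step, which requires a short case analysis (depending on whether the two adjacent outside intervals sit above or below $[\ell_i,u_i]$) verifying that the monotonic jump at each $\tau_{i,j}$ lands on the correct side; without this, merging could fail because $\Post$ at the singleton $\{\tau_{i,j}\}$ might intersect $\Istate'$. A secondary subtlety lies in the product-decomposition step: the bounding-box over-approximation of a general convex $\Istate'$ is standard but must be handled carefully to ensure that the per-dimension optima together realize the joint optimum, rather than merely an upper bound on its attainable value.
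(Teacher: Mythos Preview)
Your per-dimension counting---splitting each $W^i$ at the $d$ jump points, getting three intervals per continuity segment, and then merging across each discontinuity to reach $2d+3$ intervals---is careful and matches the paper's per-axis count. The gap is in the final aggregation step. Having committed to a product partition $C = C^1 \times \cdots \times C^n$ with $|C^i|=2d+3$, you conclude $|C|=(3+2d)n$; but a Cartesian product of $n$ such interval partitions has $(2d+3)^n$ cells, not $(2d+3)n$. The bound in the theorem is \emph{additive} in $n$, and a full product of per-axis partitions cannot realize it. The ``secondary subtlety'' you flag about the product step is real, but the actual failure is in the cell count, not in whether the per-dimension optima combine to the joint optimum.

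The paper's argument does not form a product. It works axis by axis: for $d=0$, monotonicity together with convexity and boundedness of $\Istatei$ forces at most three partitions of $W^i$ to isolate the portion of $\Post^i(\Istate,\cdot)$ meeting $\Istatei$, with two more partitions per discontinuity; the totals are then \emph{summed} over $i$ to get $(3+2d)n$. Implicitly this corresponds to a hierarchical partition of $W$: slice into slabs along axis $1$, and only refine along axis $2$ those slabs whose indicator is not already determined by the axis-$1$ projection, and so on. For $d=0$ this yields $2n+1\le 3n$ cells rather than $3^n$. To repair your proof, drop the product ansatz and build $C$ recursively in this way; your monotonicity and merge analysis then controls how many new cells are introduced at each stage rather than multiplying across stages.
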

\begin{proof}
    The proof is provided for the upper-bound partition \eqref{eq:opt-upper}.
    The lower-bound is the same, but instead uses $\Post(\Istate,c)\subseteq\Istate'$. 

    $\Cupper$ is found my choosing the constraint set that satisfies to \eqref{eq:opt-upper}.
    As the noise is uncorrelated, it is sufficient to minimize the area intersection of $\Post(\Istate,c)$ with $\Istate'$ to minimize \eqref{eq:opt-upper}.
    Let $\Istate$, $\Istate'$ be convex and bounded, let $\Post(\Istate,c)$ contain at most $d$ discontinuities for any choice of $c$, and let System \eqref{eq:system} be monotonic with respect to $\w(k)$. 

    First, consider $d=0$, so $\Post(\Istate,c)$ is convex for a given $c$ in all components. 
    Then, $\Post(\Istate,c)\cap\Istate'$ must be convex. 
    As $\Istate'$ is bounded, $\Post(\Istate,c)\cap\Istate'$ is also bounded. 
    For each $i$, due to the monotonicity of System \eqref{eq:system}, at most 3 partitions of $W^i$ are needed to induce the minimum intersection with $\Istatei$ due to the convexity and boundedness of $\Istatei$.
    Thus, $\Cupper$ consists of $3n$ partitions at most when $d=0$.

    Next, consider $d>0$ for a component of $\Post(\Istate, c)$ and begin with $\Cupper$ as found above. 
    The intersection $\Post(\Istate,c)\cap\Istate'$ is possibly non-convex due the projection of discontinuities of $\Post^i(\Istate,c)$. 
    For each discontinuity, only two additional partitions are needed to induce the minimum intersection with $\Istatei$ due to the monotonicity of System \eqref{eq:system}.
    This is repeated for each component in $[1,n]$ for the result $|\Cupper|\leq(3+2d)n$. 
    Repeating this procedure for the lower bound yields the same number of partitions in $\Clower$. 
\end{proof}
Theorem \ref{thm:partitions} shows that the sizes of $\Clower$ and $\Cupper$ are bounded, but it leaves them unspecified.
The following corollaries specify the partitions for affine and multiplicative noise in the case $\Post(\Istate,c)$ is convex.
To facilitate this, let $\Post_f(\Istate) = \{f(x) \mid x\in\Istate\}$ be the $f$-dependent posterior.
Each corollary specifies the paritioning of $W^i$ into three intervals, i.e. $\{[-\infty,\epsilon_1], [\epsilon_1,\epsilon_2], [\epsilon_2, \infty]\}\subset\Cupper$ and $\{[-\infty,\epsilon_3], [\epsilon_3,\epsilon_4], [\epsilon_4, \infty]\}\subset\Clower$.

\begin{figure}[t]
    \centering
    \begin{tikzpicture}[scale=1.5]
    \draw[ultra thick] (-3,0) -- (1,0);
    \draw[ultra thick] (0.0,-0.15) -- (1.5, -0.15);
    \draw[dashed] (-3.0, -0.15) -- (-1.5, -0.15);
    \node[below] at (-1.0, 0) {$\Istate'$};
    \node[below] at (0.5,-0.15) {$\Post_f(\Istate)$};
    \node[below] at (-2.5,-0.15) {$\Post_f(\Istate)$ Shadow};
    \node[left] at (-3.0,0) {$A$};
    \node[right] at (1.0,0) {$B$};
    \node[left] at (0.0, -0.15) {$C$};
    \node[right] at (1.5, -0.15) {$D$};

    % measurement lines
    % end point1
    \draw[line width=0.4] (-3.0, -0.20) -- (-3.0, 0.25);
    % end point 2
    \draw[line width=0.4] (1.5, -0.20) -- (1.5, 0.25);
    % line between
    \draw[<-, line width=0.4] (-3.0, 0.15) -- (1.5, 0.15);
    % label \epsilon_1 at middle of line
    \node[above] at (-0.75, 0.15) {$\epsilon_2$};

    % do the same measurement line as above, except between points B and C
    % endpoint at C
    \draw[line width=0.4] (0.0, -0.20) -- (0.0, 0.35);
    % endpoint at B
    \draw[line width=0.4] (1.0, -0.4) -- (1.0, 0.35);
    % line between
    \draw[->, line width=0.4] (0.0, 0.25) -- (1.0, 0.25);
    \node[above] at (0.5, 0.25) {$\epsilon_1$};

    % create an end point at point D
    \draw[line width=0.4] (1.5, 0.0) -- (1.5, -0.75);
    % create a line between this point and B
    \draw[<-, line width=0.4] (1.0, -0.25) -- (1.5, -0.25); 
    % add label in the middle of this line 
    \node[below] at (1.25, -0.25) {$\epsilon_3$}; 

    % endpoint at shadow
    \draw[line width=0.4] (-1.5, 0.0) -- (-1.5, -0.75);
    \draw[<-, line width=0.4] (-1.5, -0.65) -- (1.5, -0.65); 
    \node[above] at (-0.25, -0.65) {$\epsilon_4$}; 
\end{tikzpicture}
    \caption{Distances between a component of $\Post_f(\Istate)$ and $\Istate'$ used to find optimal noise partitions.}
    \label{fig:distances}
\end{figure}
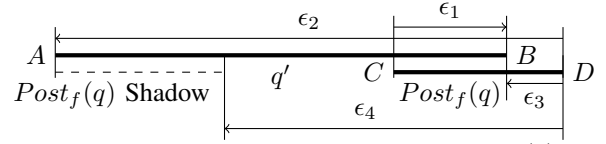

\begin{corollary}[Partitioning for Affine Noise]\label{cor:affine}
    Assume System~\eqref{eq:system} satisfies the requirements of Theorem~\ref{thm:partitions} and has affine noise, i.e. $f(\x(k)) + \w(k)$. 
    Let the $i$-th component vertices be $[A,B]$ for the target $\Istate'$, $[C,D]$ for the $f$-dependent posterior, and let $W = D-C$. 
    Let Case 1 be $W>B-A$, Case 2 be $D>B$, and Case 3 be $C<A$.
    Then $\epsilon_1 = A - D$, $\epsilon_2 = B - C$ in all cases. 
    For Case 1, $\epsilon_3 = 0$ and $\epsilon_4 = 0$.
    If Case 2 is true, $\epsilon_3 = A+W-D$, and if Case 3 is true, $\epsilon_4 = B-W-C$.
    Otherwise, $\epsilon_3 = A-C$ and $\epsilon_4 = B-D$.
\end{corollary}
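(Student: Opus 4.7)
The plan is to exploit the explicit interval form of the posterior under affine noise to reduce the optimization in \eqref{eq:optimal} to an endpoint computation on each noise component. Under the component-wise assumption of Definition~\ref{def:components} and affine dynamics $f(\x(k)) + \w(k)$, the $i$-th projection of $\Post(\Istate, c)$ corresponding to a noise interval $c^i = [w_l, w_u]$ is simply the Minkowski sum $[C + w_l,\, D + w_u]$. Because each component is monotonic in $w^i$ and the relevant sets are convex intervals, Theorem~\ref{thm:partitions} guarantees that the optimal $\Cupper$ and $\Clower$ each consist of three intervals per component, so it remains only to identify the two breakpoints in each case.

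For the upper-bound partition \eqref{eq:opt-upper}, I characterize the middle interval of $W^i$ as exactly the set of $w$ for which $[C + w_l,\, D + w_u] \cap [A, B] \neq \emptyset$. Solving the emptiness condition, namely $D + w_u < A$ or $C + w_l > B$, forces the middle interval to be $[A - D,\, B - C]$, yielding $\epsilon_1 = A - D$ and $\epsilon_2 = B - C$. The ordering $\epsilon_1 \leq \epsilon_2$ follows from $A \leq B$ and $C \leq D$, so the three-interval partition is always well-formed and, by construction, no larger exclusion region is achievable.

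For the lower-bound partition \eqref{eq:opt-lower}, the middle interval must satisfy $[C + w_l,\, D + w_u] \subseteq [A, B]$, i.e., $w_l \geq A - C$ and $w_u \leq B - D$, so the maximal such interval is $[A - C,\, B - D]$ whenever $A - C \leq B - D$, equivalently $W \leq B - A$. Case~1 is precisely the obstruction $W > B - A$, under which no valid middle interval exists and the degenerate choice $\epsilon_3 = \epsilon_4 = 0$ contributes zero probability without violating optimality. For Cases~2 and~3, I would verify by direct substitution, using $W = D - C$, that the stated formulas $A + W - D$ and $B - W - C$ reduce to the endpoints $A - C$ and $B - D$ identified above, matching the ``Otherwise'' branch.

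The main obstacle I anticipate is not the interval computation itself but the consistent bookkeeping of boundary and sign conditions: checking that the piecewise specification of $\epsilon_3$ and $\epsilon_4$ across Cases~1--3 and the ``Otherwise'' branch covers the $(A,B,C,D)$ parameter space without contradiction, that $\epsilon_3 \leq \epsilon_4$ outside Case~1, and that the measure-zero ``touching'' events (such as $D + w_u = A$) may be assigned to either neighboring partition without affecting $\Pr(c)$. Once these are resolved, optimality follows immediately from Theorem~\ref{thm:partitions} applied to each component independently, making the corollary essentially the explicit solution of two one-dimensional inequality systems per component.
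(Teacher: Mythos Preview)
Your proposal is correct and follows essentially the same approach as the paper: both arguments reduce to identifying, from the relative geometry of the translated interval $[C+w,\,D+w]$ and the target $[A,B]$, the maximal noise interval producing non-empty intersection (for $\epsilon_1,\epsilon_2$) and the maximal interval producing containment (for $\epsilon_3,\epsilon_4$). The paper presents this pictorially via Figure~\ref{fig:distances} for Case~2 and defers the remaining cases to ``similarly,'' whereas you carry out the interval arithmetic explicitly and additionally observe that the Case~2/3 formulas collapse algebraically (via $W=D-C$) to the ``Otherwise'' values $A-C$ and $B-D$; this is a valid consistency check rather than a different idea.
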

\begin{proof}
Figure \ref{fig:distances} illustrates Case 2 of the relative geometry of a component of $\Post_f(\Istate)$ and a target state $\Istate'$.
The lower-bound \eqref{eq:lower} is maximized when $\epsilon_3 < \w(k) \leq \epsilon_4$ as this is the largest interval that results in an intersection.
The upper-bound \eqref{eq:upper} is minimized when $\w(k) < \epsilon_1$ or $\epsilon_2 < \w(k)$ as the intersection is zero outside of this interval.
The partitions for other relative component geometries of $\Post_f(\Istate)$ and $\Istate'$ are found similarly.
\end{proof}

\begin{corollary}[Partitioning for Multiplicative Noise]
    Assume System~\eqref{eq:system} satisfies the requirements of Theorem~\ref{thm:partitions} and has (w.l.o.g.) positive multiplicative noise, i.e., Example \ref{ex:multi}. 
    Let the component vertices be $[A,B]$ for the target and $[C,D]$ for the $f$-dependent posterior, and let $A,B,C,D>0$.
    Then, $\epsilon_1 = A/D,~ 
        \epsilon_2 = B/C,~
        \epsilon_3 = A/C,$ and 
        $\epsilon_4 = B/D$.
\end{corollary}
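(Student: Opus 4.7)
The plan is to mirror the geometric argument of Corollary~\ref{cor:affine}, replacing Minkowski sums with products and exploiting positivity of both $\wi(k)$ and the $f$-dependent posterior. Since the noise acts component-wise and, by Example~\ref{ex:multi}, System~\eqref{eq:system} is monotonic in $\w(k)$, each component $i$ can be analyzed independently. For a noise cell $c = [w_1, w_2] \subseteq [0, \infty)$ and the $i$-th component of the $f$-dependent posterior equal to $[C, D] \subseteq (0, \infty)$, positivity implies that the $i$-th component of $\Post(\Istate, c)$ is the interval $[C w_1, D w_2]$, because multiplication preserves order among positive reals. This is the multiplicative analogue of the affine Minkowski sum $[C + w_1, D + w_2]$ used in Corollary~\ref{cor:affine}.

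Next, for the upper-bound partition $\Cupper$, I would characterize exactly when this product interval is disjoint from the target $[A, B]$: either $D w_2 \leq A$, giving $w_2 \leq A/D$, or $C w_1 \geq B$, giving $w_1 \geq B/C$. Thus $\epsilon_1 = A/D$ and $\epsilon_2 = B/C$ are the thresholds whose outer cells $[-\infty, \epsilon_1]$ and $[\epsilon_2, \infty]$ collect exactly the noise values inducing empty intersections, making the three-cell partition $\{[-\infty, \epsilon_1], [\epsilon_1, \epsilon_2], [\epsilon_2, \infty]\}$ optimal for \eqref{eq:opt-upper} by the same argument as in the affine case.

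Symmetrically, for the lower-bound partition $\Clower$, the containment $[C w_1, D w_2] \subseteq [A, B]$ holds iff $C w_1 \geq A$ and $D w_2 \leq B$, i.e., $w_1 \geq A/C$ and $w_2 \leq B/D$. Hence $\epsilon_3 = A/C$ and $\epsilon_4 = B/D$ delimit the middle cell of $\{[-\infty, \epsilon_3], [\epsilon_3, \epsilon_4], [\epsilon_4, \infty]\}$, which is precisely the set of noise values whose induced posterior lies in $\Istate'$ and therefore maximizes \eqref{eq:opt-lower}. There is no substantive obstacle here: the only care required is the endpoint pairing (the lower endpoint of $c$ with $C$ and the upper with $D$, forced by positivity), after which the optimality and the three-cell structure per component follow by exactly the same reasoning as in Corollary~\ref{cor:affine}.
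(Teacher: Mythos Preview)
Your proposal is correct and follows essentially the same geometric argument as the paper: both derive the thresholds by asking, for positive multiplicative noise acting on $[C,D]$, when the product interval is disjoint from or contained in $[A,B]$, yielding the same $\epsilon_1,\epsilon_2,\epsilon_3,\epsilon_4$. Your version is slightly more explicit in working with a cell $[w_1,w_2]$ and pairing endpoints via positivity, whereas the paper argues pointwise in $\w(k)$, but the reasoning is the same.
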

\begin{proof}
The proof is based on the relative geometry in Figure \ref{fig:distances} similar to that of Corollary \ref{cor:affine}.  
The lower-bound is maximized when both $D\w(k) < B$ and $C\w(k) > A$, so $A/C < \w(k) \leq B/D$. 
Note that if $A/C>B/D$, then the CDF evaluates to zero, so the partition set is trivial.
Likewise, the upper-bound is minimized when $C\w(k) > B$ or $D\w(k) < A$, so $\w(k) < A/D$ or $\w(k) > B/C$.
\end{proof}

The case studies in Section \ref{sec:examples} demonstrate that using these optimal partitions find accurate abstractions for nonlinear systems with non-standard noise, even when compared to specialized methods.

\section{STATE CLUSTERING}\label{sec:clustering}
Improving the satisfaction intervals of the IMC directly impacts the guarantees on \eqref{eq:system}, but relying solely on refining the space discretization can lead to an explosion in the number of states.
We propose a novel method based on clustering the states of the IMC to improve the satisfaction intervals without refinement.

Consider state $\Istate\in\IStates_X$ and its possible successor states $\IStates'$.  
By the structure of \eqref{eq:lower}, $\Plow(\Istate,\Istate')$ grows as the size of $\Istate'$ increases as it depends on $\Post(\Istate,c)\subset\Istate'$ being true.
Algorithm \ref{alg:clustering} is based on this principle. 
The sorting of $\IStates_X$ on Line \ref{alg:sort} makes the algorithm start with states with large $\plow$, as its successor states have larger $\plow$, making improvement more likely. 
Then, the states in $\tilde\IStates$ are clustered (or merged) into a single state $\tilde\Istate$.  
The transition interval to $\tilde\Istate$ is computed using Theorem \ref{thm:transition_bounds}, and the satisfaction intervals are recalculated.
This procedure can be repeated until no improvements in $\plow$ or $\phigh$ are 
realized.

Our evaluations suggests the efficacy of this algorithm for a coarse abstractions.
An interesting question, however, arises with this approach: what is the optimal set of states to cluster?  
We leave this question for future work.  

\begin{algorithm}[t]
    \caption{Clustering-based IMC improvement}\label{alg:clustering}
    \begin{algorithmic}[1]
        \Require IMC $\imc$, verification results $\plow$, $\phigh$
        \State $\IStates_X\gets$ sort by $\plow(\Istate)$ in descending order \label{alg:sort}
        \ForEach{$\Istate\in\IStates_X$}
            \State $\tilde \Istate\gets$ cluster $\tilde\IStates\subset\IStates'$ into a single state \label{alg:cluster}
            \State $\plow(\tilde\Istate) \gets \min\limits_{\Istate' \in \tilde\IStates} \plow(\Istate')$ \label{alg:minimum}
            \State $\Plow(\Istate, \tilde \Istate),\Phigh(\Istate, \tilde\Istate)\gets$ Theorem \ref{thm:transition_bounds}
            \State $\plow_{new}(\Istate)\gets \min\limits_{\adv\in\Adv}  \sum\limits_{\Istate' \in \IStates'\setminus \tilde\IStates}\adv(\Istate,\Istate')\plow(\Istate') + \adv(\Istate,\tilde\Istate)\plow(\tilde\Istate)$ \label{alg:lower}
            \State $\phigh_{new}(\Istate)\gets$ similarly with \eqref{eq:value}
            \If{$\plow_{new}(\Istate) > \plow(\Istate)$ or $\phigh_{new}(\Istate) < \phigh(\Istate)$}
                \State Save these values into $\plow$, $\phigh$ \label{alg:improve} 
            \EndIf
        \EndFor
        \State \Return Improved intervals $\plow$, $\phigh$ 
    \end{algorithmic}
\end{algorithm}

\section{Evaluations}\label{sec:examples}
We evaluate the proposed methods on linear, non-linear, and data-driven systems, with different noise distributions and structures.
All systems are verified against the PCTL specification 
$\phi$ that states
% $\Pr_{\geq 0.9} \neg O \, \mathcal{U}^k G$, which reads, 
``the probability of reaching goal $G$ within $k$ steps while avoiding obstacles $O$ is $\geq 0.9$.''
Figures include classifications of IMC states satisfying ($\models\phi$) or violating ($\not\models\phi$) the specification, and possibly either ($?\phi$).
The value of $k$ is infinity unless otherwise noted.

\subsubsection{Linear System Comparison}\label{sec:linear1}
We first compare our approach developed for general dynamics and noise distributions against the direct-search method in \cite{dutreix2018efficient} that is tuned for specific dynamics and noise models.
The considered system is linear with additive truncated Gaussian noise taken from \cite{dutreix2018efficient}.
Figure~\ref{fig:linear1} compares the results on abstractions found using these two methods.
Our method provides near-identical classification results with average and maximum differences of $8\times 10^{-4}$ and $0.02$, respectively, in the lower-bound satisfaction probability, and took approximately 90 seconds to compute.
In this example, the set-based criteria of Theorem~\ref{thm:transition_bounds} are sufficient to provide an accurate 
abstraction.

\newcommand{\figwidth}{0.40\linewidth}
\setlength{\abovecaptionskip}{2mm}
\setlength{\belowcaptionskip}{0mm}
\newcommand{\subcapaboveskip}{\vspace{-6mm}}

\begin{figure}
    \centering
    \begin{subfigure}{0.1\linewidth}
        \centering
        \raisebox{27px}{\scalebox{.8}{
        \begin{tikzpicture}
            \definecolor{c1}{RGB}{0,175,245}
            \definecolor{c2}{RGB}{214,250,255}
            \definecolor{c3}{RGB}{213,86,114}
            
            \filldraw[fill=c1, draw=black] (0,0) rectangle (0.5,0.5);
            \node[anchor=north] at (0.25, 1.1) {$\models\phi$};
            \filldraw[fill=c2, draw=black] (0,1.1) rectangle (0.5,1.6);
            \node[anchor=north] at (0.25,2.2) {$ ? \phi$};
            \filldraw[fill=c3, draw=black] (0,2.2) rectangle (0.5,2.7);
            \node[anchor=north] at (0.25 ,3.3) {$\not\models\phi$};
        \end{tikzpicture}
        }}
    \end{subfigure}
    \begin{subfigure}{\figwidth}
        \includegraphics[trim=300px 0 300px 0, clip, width=\linewidth]{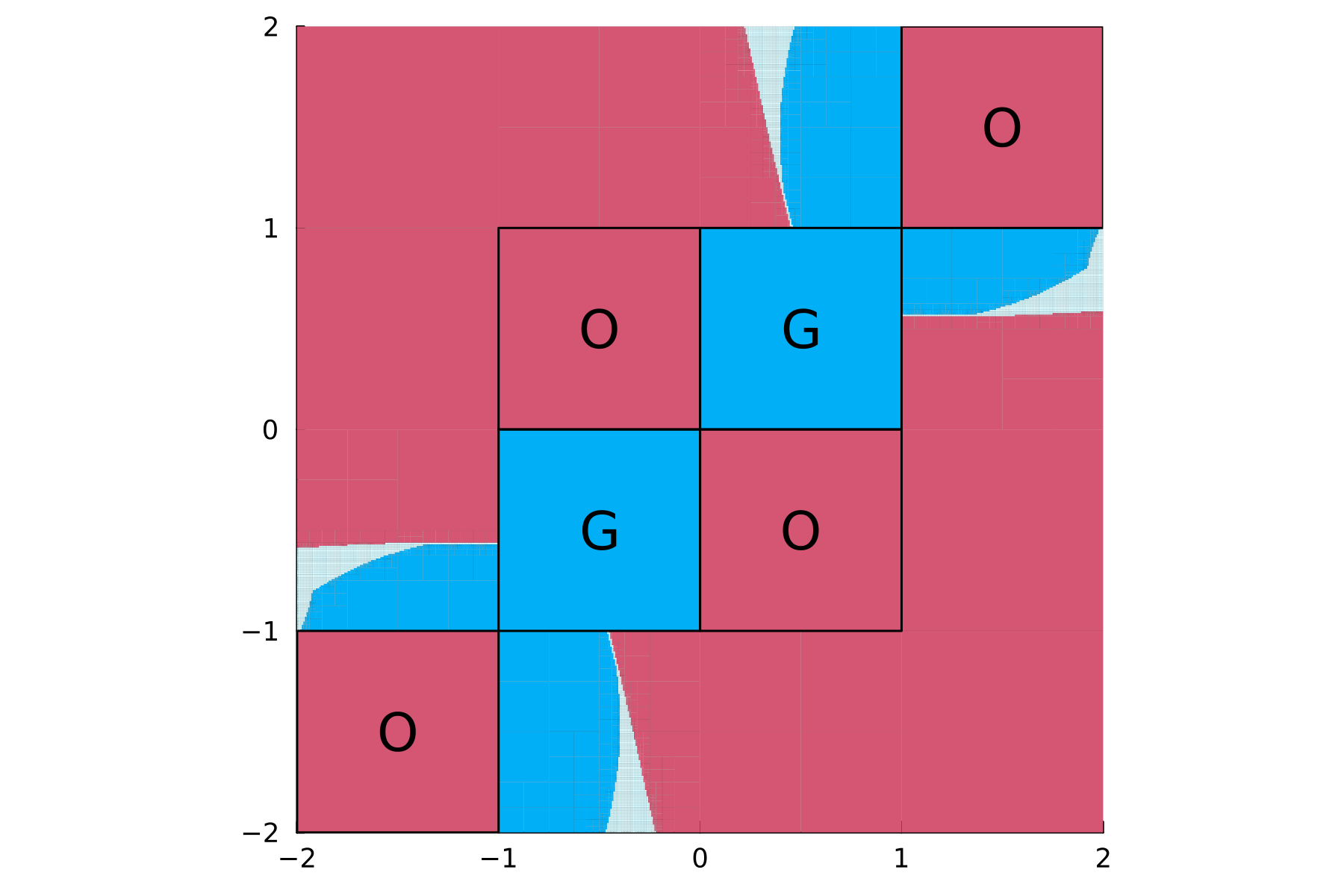}
        \subcapaboveskip
        \caption{Our method}
    \end{subfigure}
    \begin{subfigure}{\figwidth}
        \includegraphics[trim=300px 0 300px 0, clip, width=\linewidth]{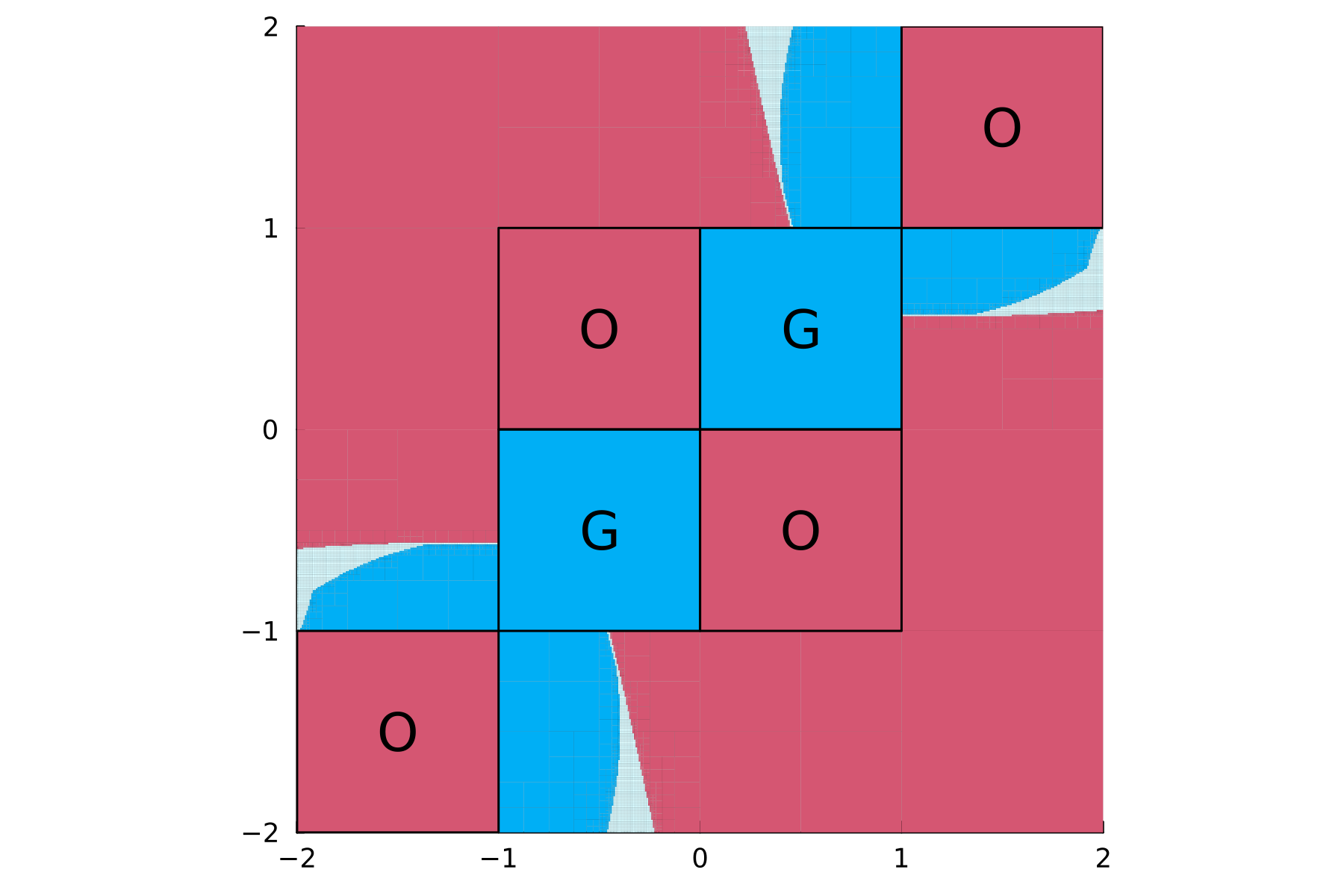}
        \subcapaboveskip
        \caption{Method in \cite{dutreix2022abstraction}}
    \end{subfigure}
    \caption{Comparison of verification results.
    }\label{fig:linear1}
    \vspace{-3mm}
\end{figure}

\subsubsection{Multiplicative Noise}
To show the flexibility of the approach, we consider a system with multiplicative noise, which existing abstraction approaches cannot explicitly handle, to the best of our knowledge. 
The dynamics are $\x(k+1)=A\x(k)\odot\w(k)$, where the 1st and 2nd rows of $A$ are $(0.7,0.1)$ and $(0.1,0.8)$, $\wi(k)\sim \mathcal{N}(1, 0.1)$ bounded in $[0.9, 1.1]$ for $i\in\{1,2\}$.  
Figure~\ref{fig:multiplicative1} shows the results of the procedure, which took two minutes to compute, including 1000 Monte Carlo validation trajectories with mean paths and 1-sigma confidence ellipsoids.

\begin{figure}
    \centering
    \begin{subfigure}{\figwidth}
        \includegraphics[trim=300px 0 300px 0, clip, width=\linewidth]{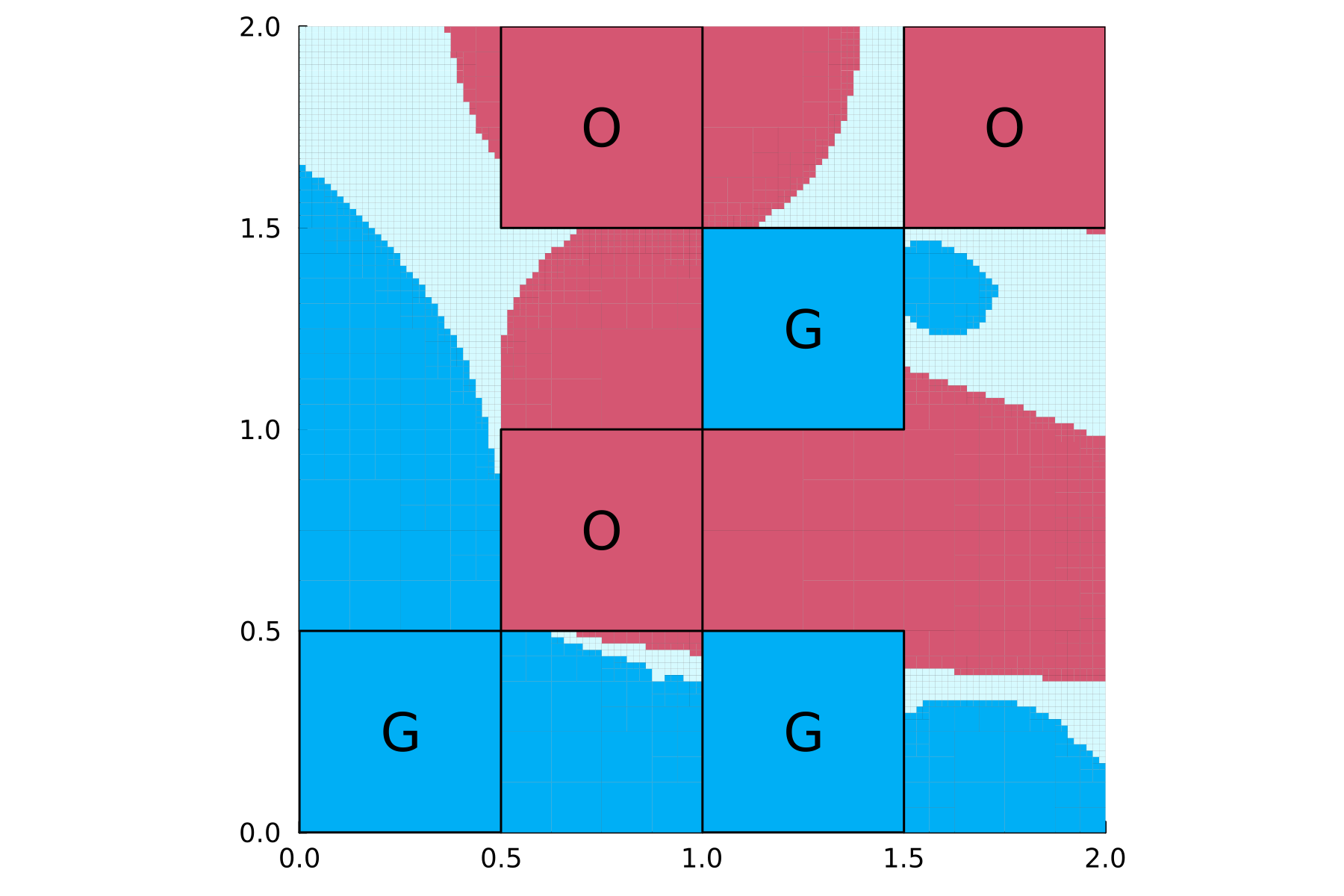}
        \subcapaboveskip
        \caption{Verification}
    \end{subfigure}
    \begin{subfigure}{\figwidth}
        \includegraphics[trim=300px 0 300px 0, clip, width=\linewidth]{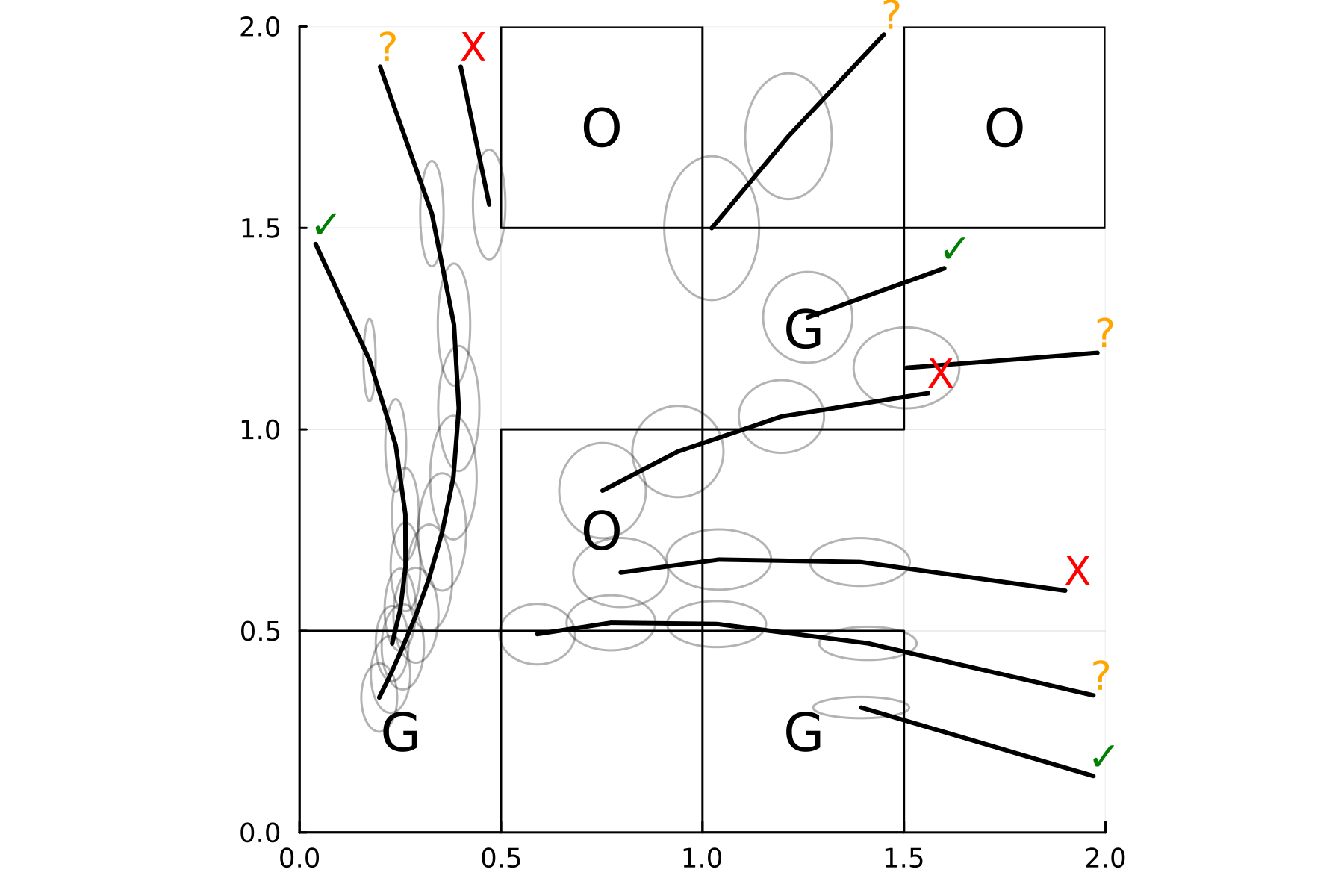}
        \subcapaboveskip
        \caption{Sampled Trajectories}
    \end{subfigure}
    \caption{Verification of the system with multiplicative noise.}\label{fig:multiplicative1}
\end{figure}

\subsubsection{Data-driven Verification}\label{sec:data-driven}
The linear system from the comparison example with $\wi(k)\sim\mathcal{N}(0,0.1^2)$ on each component is learned via Gaussian process (GP) regression with 200 data points.
The transition bounds must take into account both the resulting uncertainty from the learning procedure and the inherent system noise.
The efficacy of Algorithm \ref{alg:clustering} is demonstrated on the initial discretization.
Figures \ref{fig:cluster1} and \ref{fig:cluster2} show additional satisfying and violating states are identified without refinement.  
The lower-bound satisfaction was improved in eight states, with a 10\% (absolute) average increase.
Figures \ref{fig:learned1} and \ref{fig:learned2} compare the classifications of the learned system using Theorem \ref{thm:transition_bounds}, employing refinement, and the classifications using ideal knowledge.

\begin{figure}
    \centering
    \begin{subfigure}{\figwidth}
        \includegraphics[trim=300px 0 300px 0, clip, width=\linewidth]{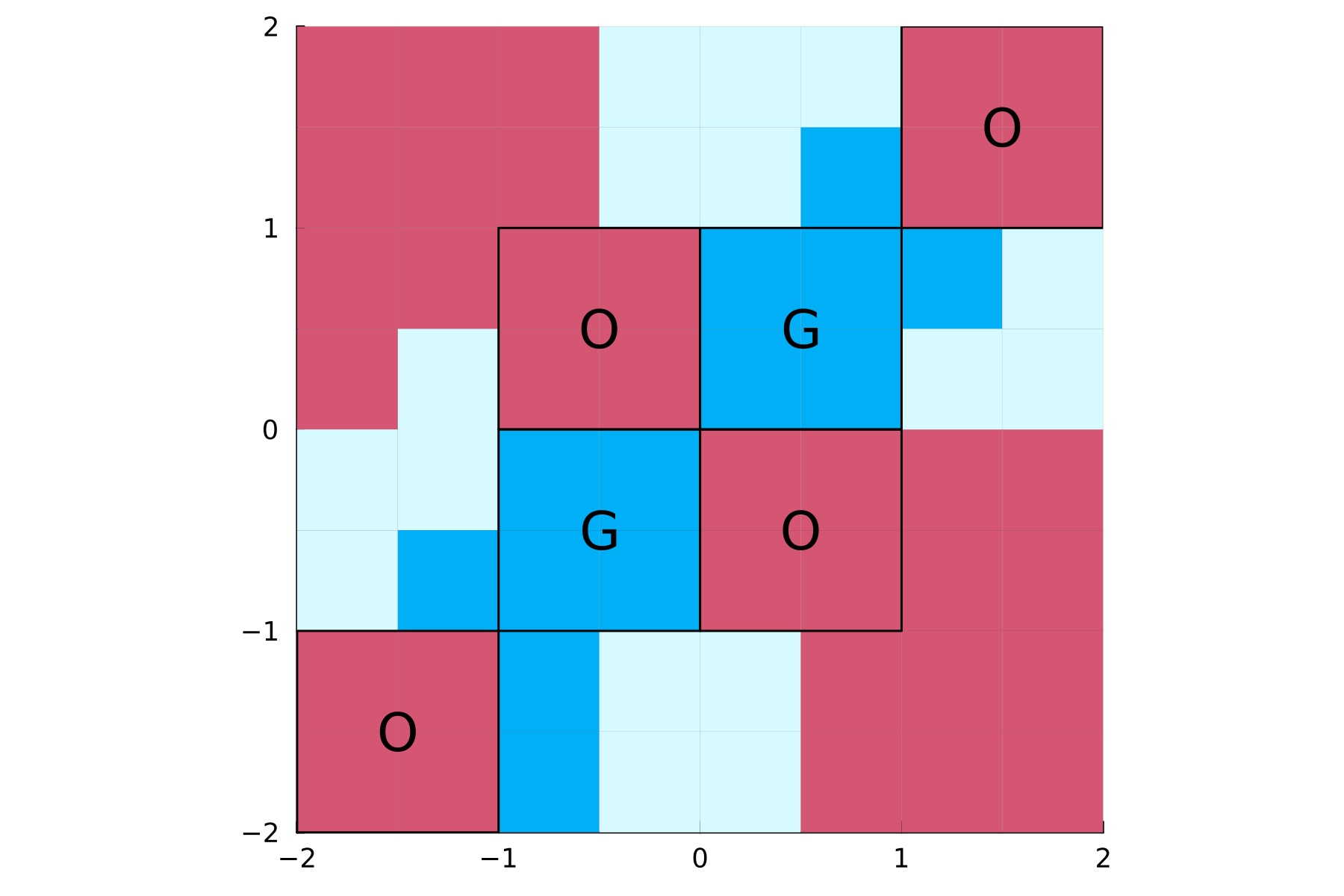}
        \subcapaboveskip
        \caption{Initial}\label{fig:cluster1}
    \end{subfigure}
    \begin{subfigure}{\figwidth}
        \includegraphics[trim=300px 0 300px 0, clip, width=\linewidth]{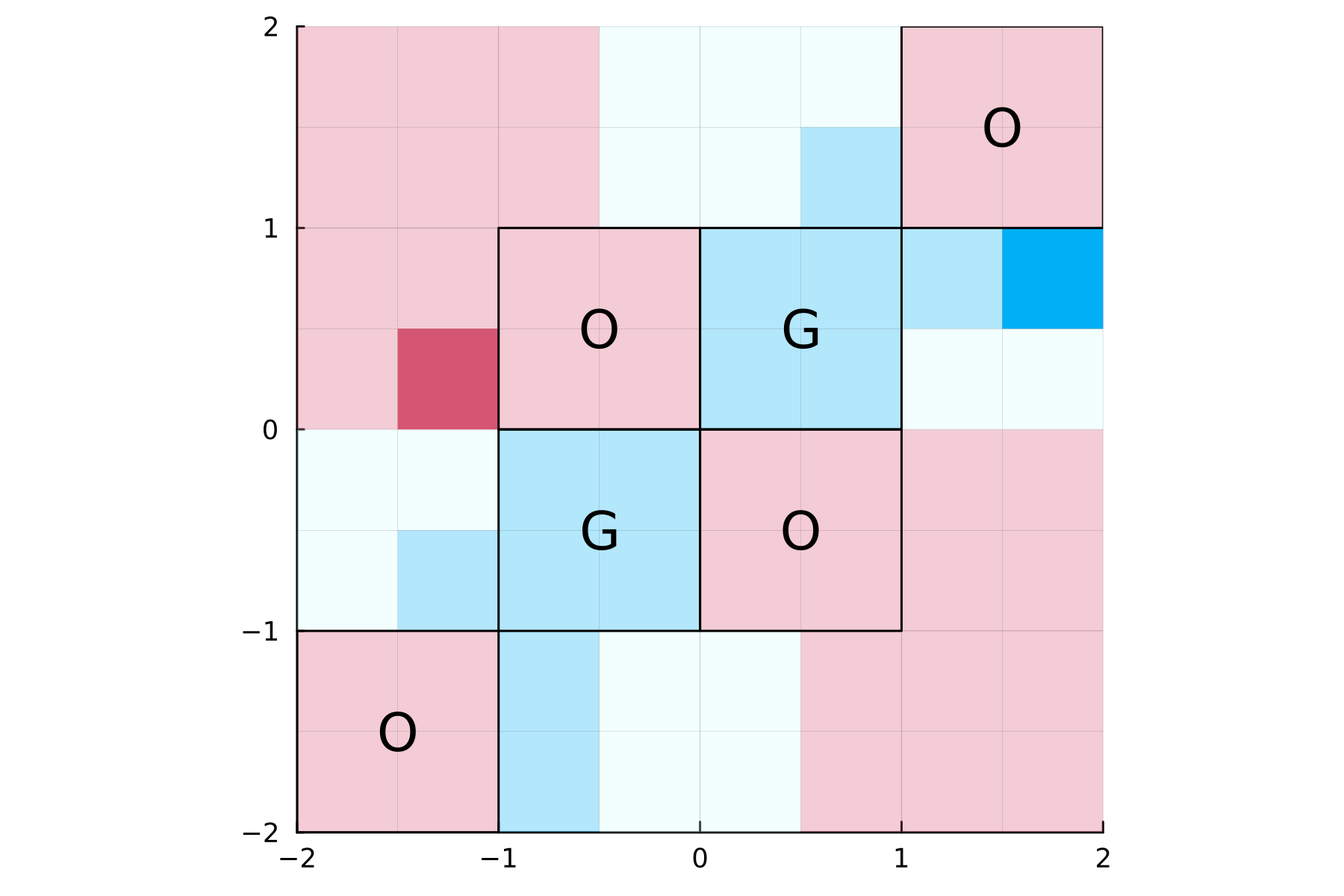}
        \subcapaboveskip
        \caption{Cluster (1 step)}\label{fig:cluster2}
    \end{subfigure}
    \begin{subfigure}{\figwidth}
        \includegraphics[trim=300px 0 300px 0, clip, width=\linewidth]{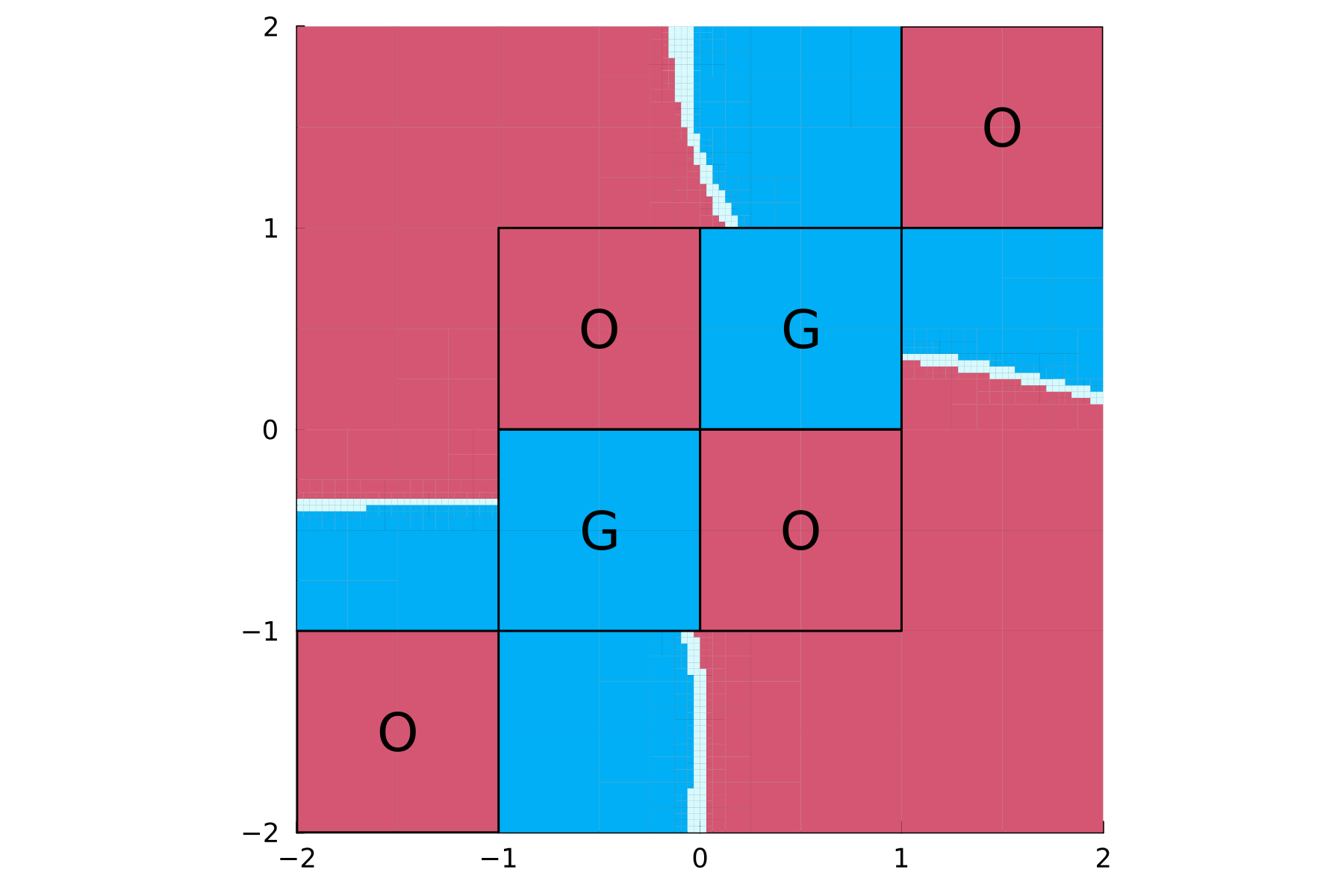}
        \subcapaboveskip
        \caption{Final (GP)}\label{fig:learned1}
    \end{subfigure}
    \begin{subfigure}{\figwidth}
        \includegraphics[trim=300px 0 300px 0, clip, width=\linewidth]{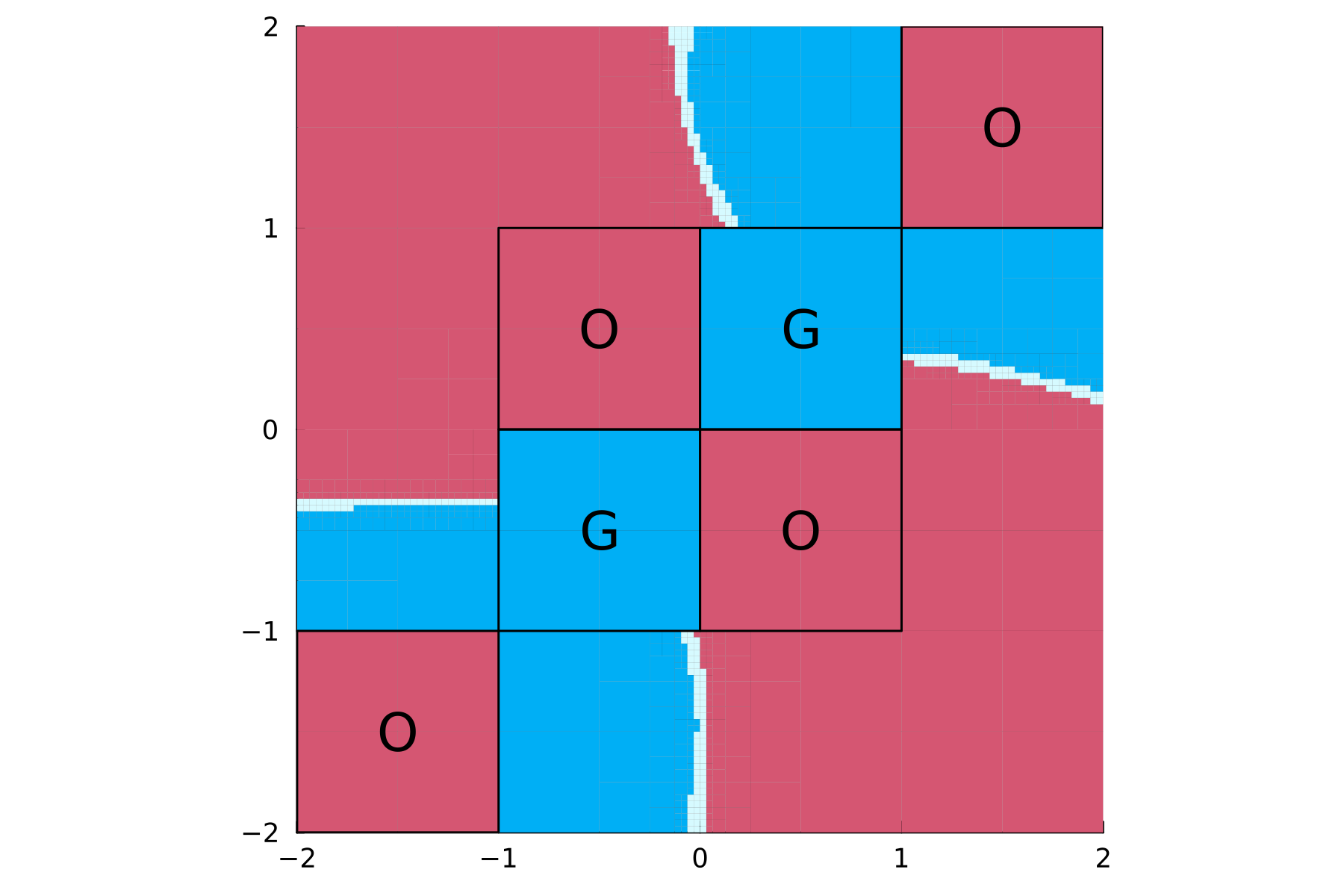}
        \subcapaboveskip
        \caption{Final (Known)}\label{fig:learned2}
    \end{subfigure}
    \caption{Verification of the learned linear system.}
\end{figure}

\subsubsection{Duffing Oscillator}
Next, we consider the nonlinear Duffing oscillator that has complex harmonic motion and chaotic behavior.
The continuous-time dynamics are 
$\ddot{x} + \delta \dot{x} + \alpha x + \beta x^3 = \gamma \cos(\omega t),$
where, $\delta=0.3$, $\alpha=-1.0$, $\beta=1.0$, $\gamma=0.37$, and $\omega=1.2$.
This system is discretized over the time-span $[0,0.5]$, after which the forcing function is reset and noise $\wi(k)\sim\mathcal{N}(0.1, 0.01^2)$ is drawn.
Taylor models were used to over-approximate the $\Post$ of each discrete region~\cite{JuliaReach19}.
The abstraction and verification for $k=10$, shown in Figure \ref{fig:duffing1}, took 4.5 hours to compute.
The trajectories show the means of 1000 sampled path with 2-sigma confidence ellipsoids at each discrete endpoint, and the initial classification according to the IMC verification.

\begin{figure}
    \centering
    \begin{subfigure}{\figwidth}
        \includegraphics[trim=300px 0 300px 0, clip, width=\linewidth]{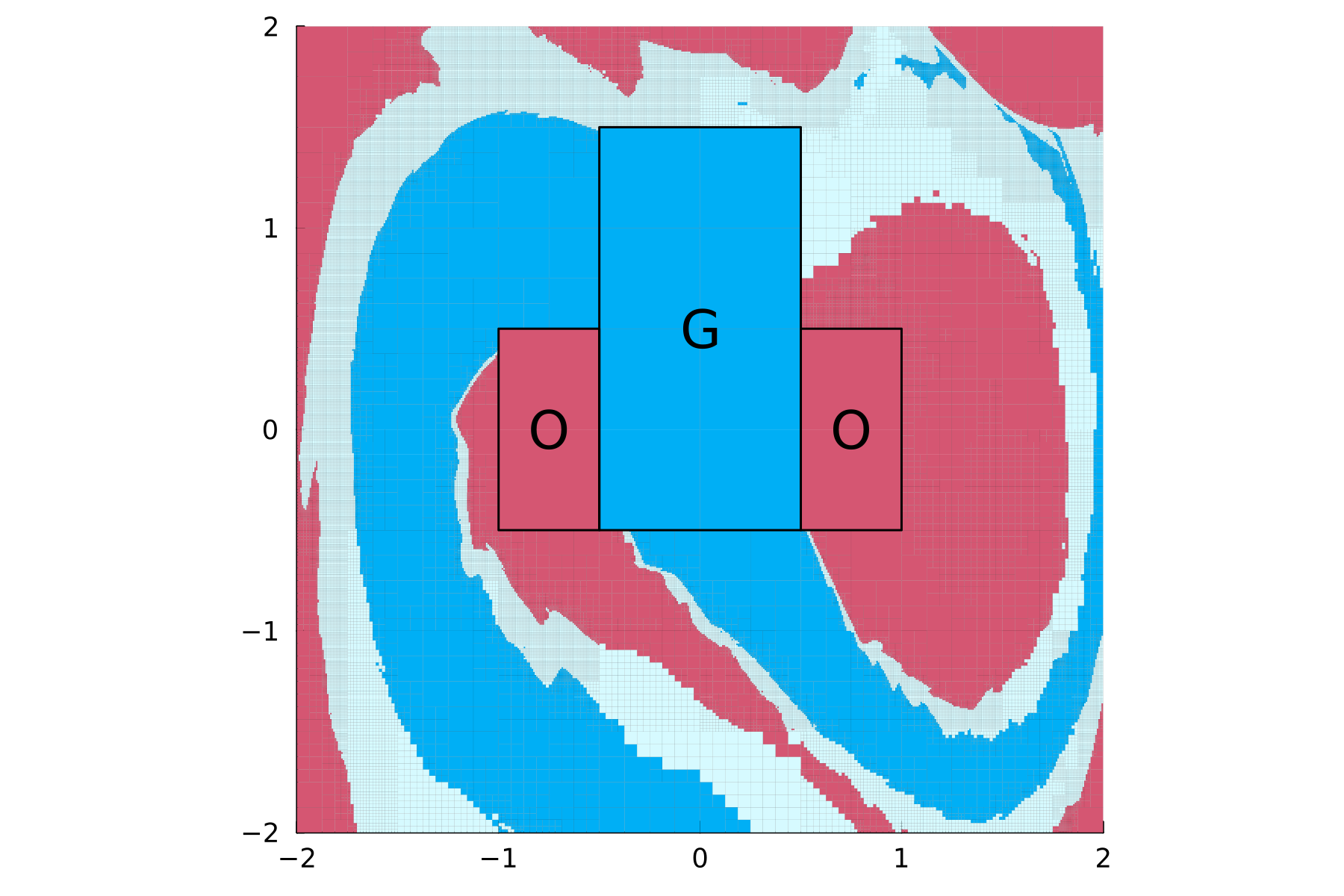}
        \subcapaboveskip
        \caption{Classifications}
    \end{subfigure}
    \begin{subfigure}{\figwidth}
        \includegraphics[trim=300px 0 300px 0, clip, width=\linewidth]{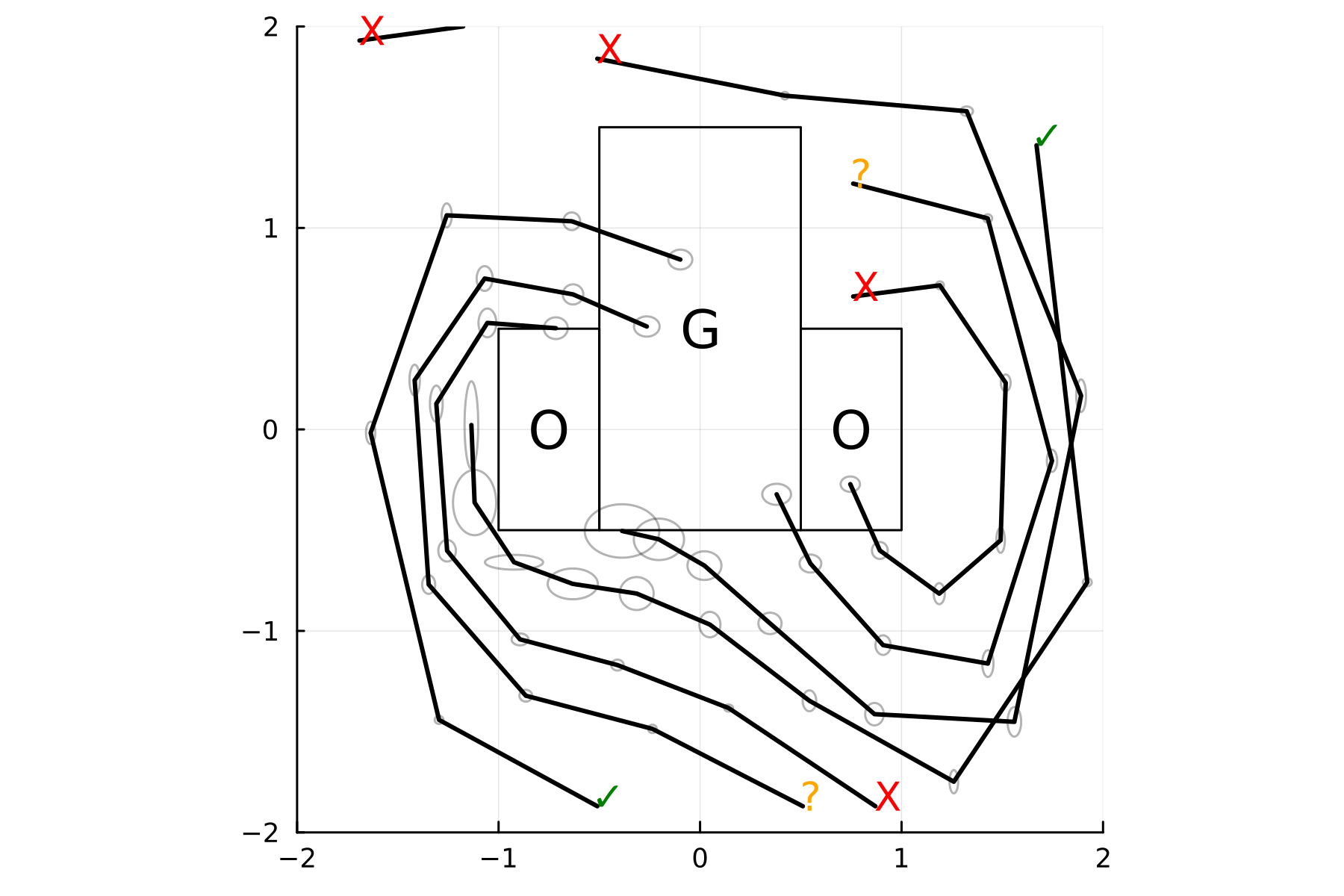}
        \subcapaboveskip
        \caption{Sampled Trajectories}
    \end{subfigure}
    \caption{Verification of the noisy Duffing oscillator.}\label{fig:duffing1}
\end{figure}

\subsubsection{Dubin's Aircraft with Mixture}
A 3D nonlinear Dubin's car model~\cite{Owen2015} with a constant right-turn control input is verified with mixture noise model consisting of $\textsc{Uniform}(-0.05, -0.01)$ and $\textsc{Uniform}(0.0, 0.04)$ each with a 50\% weighting on each state $x,y,\theta$.
Figure \ref{fig:dubins1} shows the verification results in 3D and a 2D slice.
Many initial states are identified that are guaranteed to make the turn safely, or fail to meet the minimum safety threshold.
This shows the potential of the proposed method to incorporate non-standard distributions for autonomous system.

\begin{figure}
    \centering
    \begin{subfigure}{\figwidth}
        \includegraphics[trim=80px 200px 100px 200px, clip,width=\linewidth]{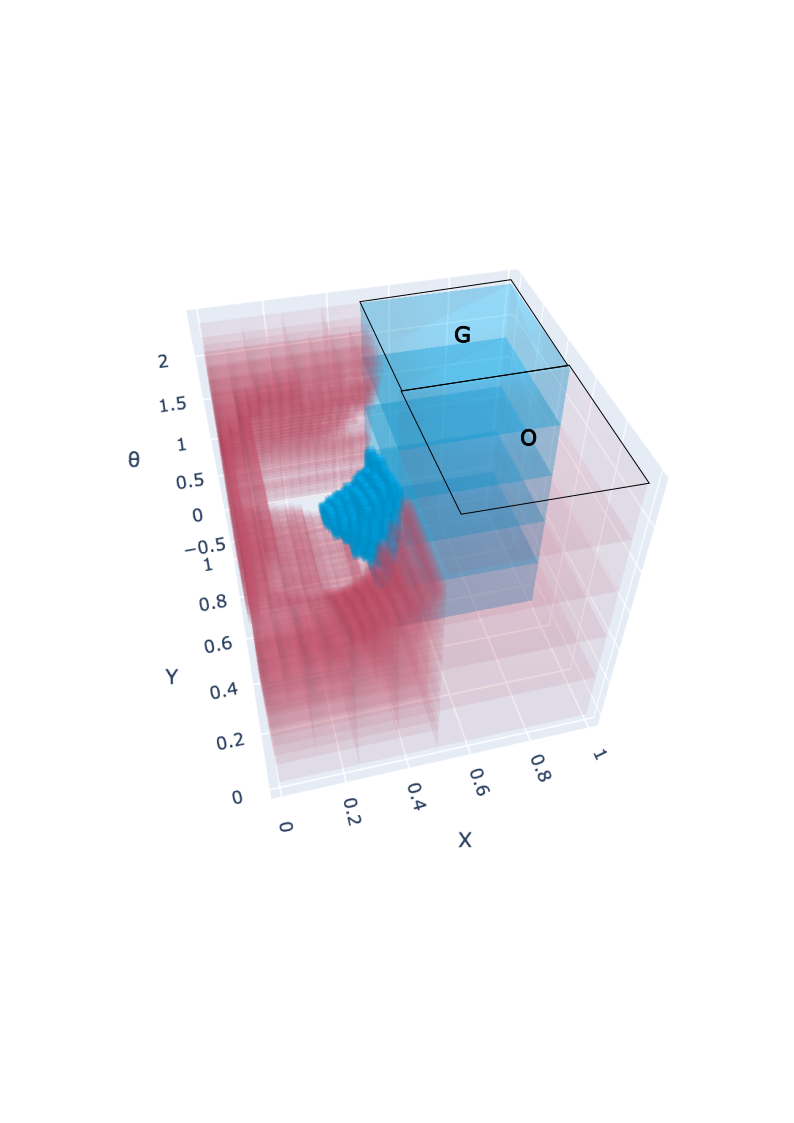}
        \subcapaboveskip
        \caption{3D Classifications}
    \end{subfigure}
    \begin{subfigure}{\figwidth}
        \includegraphics[trim=300px 0 300px 0, clip, width=\linewidth]{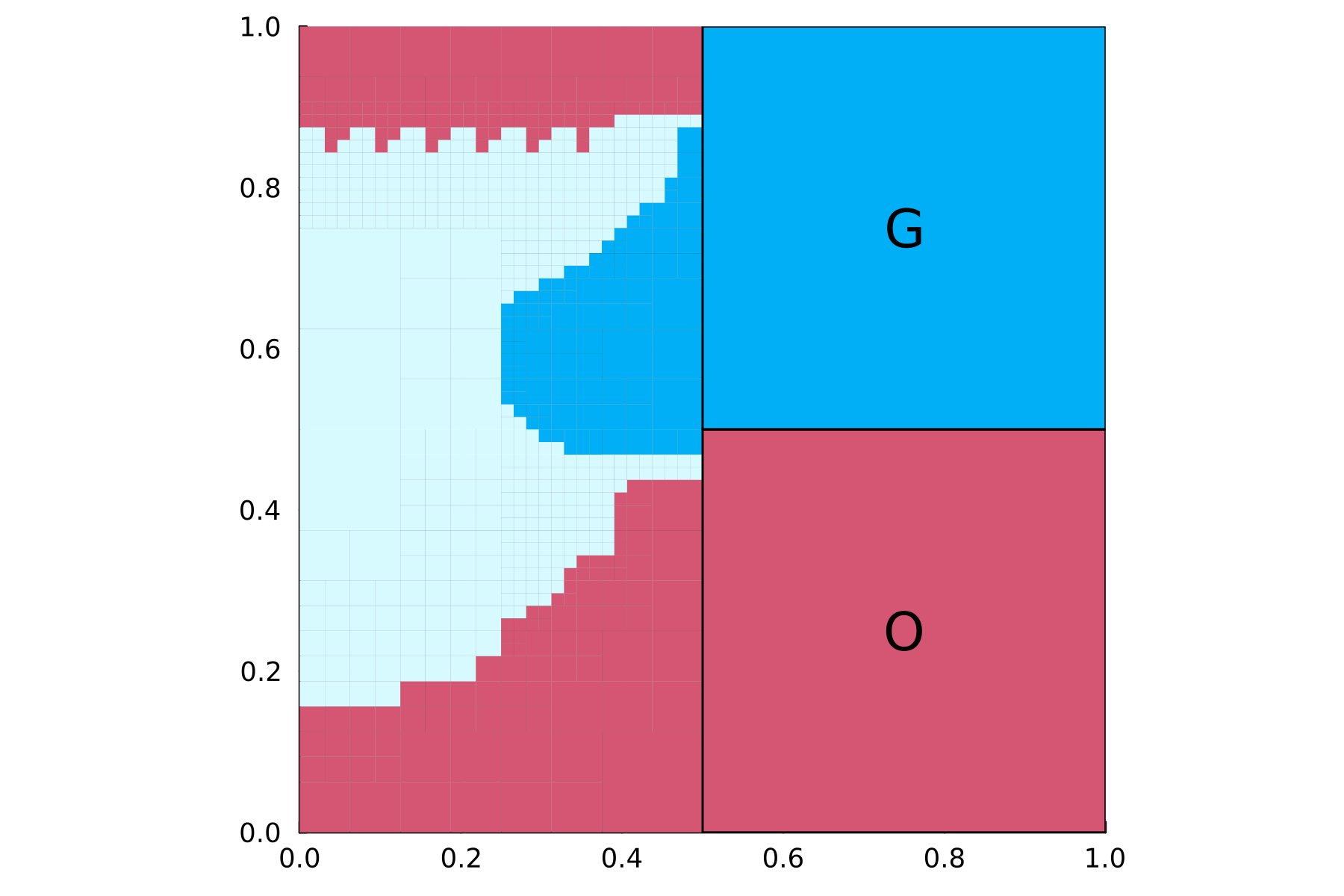}
        \subcapaboveskip
        \caption{Slice at $\theta=0.91$ (\rotatebox[origin=c]{52}{$\rightarrow$})}
    \end{subfigure}
    \caption{Verification of the constant-turn Dubin's car system.}\label{fig:dubins1}
\end{figure}

\section{CONCLUSIONS}
We present a method to construct IMC abstractions of nonlinear stochastic systems based on partitioning the noise domain, and a novel refinement-free approach to improve the verification results.
This procedure admits a wider class of systems, including those with non-affine and non-standard noise distributions, and data-driven systems.
Multiple examples demonstrate the effectiveness of the method, including nonlinear systems, multiplicative noise and a data-driven system.
Future work includes incorporating measurement models, generalizing the optimal partitioning, and improving the efficacy of the clustering procedure.

\bibliographystyle{IEEEtran}
\bibliography{root.bib}

\end{document}